\documentclass[11pt]{article}

\usepackage{fullpage}
\usepackage{booktabs} 
\usepackage{verbatim}
\usepackage{amsmath, amssymb, amsthm}
\usepackage{tikz}
\usetikzlibrary{arrows}
\usepackage{color, xcolor}
\usepackage{ifthen}
\usepackage[caption=false,font=footnotesize]{subfig}
\usepackage{thmtools, thm-restate}
\usepackage{algorithmic}
\usepackage[boxed]{algorithm}

\usepackage{hyperref}
\hypersetup{colorlinks=true,urlcolor=blue,linkcolor=blue,citecolor=blue}

\newcommand{\compilehidecomments}{true}

\ifthenelse{ \equal{\compilehidecomments}{true} }{%
	\newcommand{\wei}[1]{}
	\newcommand{\shanghua}[1]{}
	\newcommand{\hanrui}[1]{}
}{
	\newcommand{\wei}[1]{{\color{blue!50!black}  [\text{Wei:} #1]}}
	\newcommand{\shanghua}[1]{{\color{brown!60!black} [\text{Shanghua:} #1]}}
	\newcommand{\hanrui}[1]{{\color{green!60!black} [\text{Hanrui:} #1]}}
}

\newcommand{\I}{\mathcal{I}}

\theoremstyle{plain}

\newtheorem{note}{Note}[section]
\newtheorem{axiom}{Axiom}[section]
\newtheorem{mythm}{Theorem}[section]
\newtheorem{mylem}{Lemma}[section]
\newtheorem{lemma}{Lemma}[section]
\newtheorem{myprop}{Proposition}[section]
\newtheorem{mydef}{Definition}[section]

\newcommand{\E}{\mathbb{E}}
\newcommand{\R}{\mathbb{R}}
\newcommand{\bT}{\mathbb{T}}

\renewcommand{\bT}{\boldsymbol{T}}

\newcommand{\bv}{\boldsymbol{v}}

\newcommand{\cL}{\mathcal{L}}

\newcommand{\epsi}{\hat{{\psi}}}

\newcommand{\vv}{\vec{v}}

\newcommand{\vb}{\vec{b}}
\newcommand{\vzero}{\vec{0}}

\newcommand{\Closeness}{{\rm cls}}
\newcommand{\Degree}{{\rm deg}}
\newcommand{\Harmonic}{{\rm har}}
\newcommand{\Reachability}{{\rm rch}}
\newcommand{\SoI}{{\rm SoI}}
\newcommand{\SNI}{{\rm SNI}}
\newcommand{\SNSSoI}{{\rm SNSSoI}}
\newcommand{\Shapley}{{\rm Shapley}}
\newcommand{\Group}{{\rm grp}}
\newcommand{\Individual}{{\rm ind}}

\newcommand{\IndDegree}{\text{ind-deg}}

\def\est{{\rm est}}
\def\LB{{\rm LB}}

\def\INPUT{\REQUIRE}
\def\OUTPUT{\ENSURE}

\def\ICERR{{\sf ICE-RR}}

\begin{document}
	
\title{A Systematic Framework and Characterization of \\ Influence-Based Network Centrality}

\author{Wei Chen \\
  Microsoft Research \\
  \texttt{weic@microsoft.com}
  \and
  Shang-Hua Teng \\
  University of Southern California \\
  \texttt{shanghua@usc.edu}
  \and
  Hanrui Zhang \\
  Duke University \\
  \texttt{hrzhang@cs.duke.edu}
}


\maketitle

\begin{abstract}

In this paper, we present a framework 
  for studying the following fundamental question in network analysis:
\begin{quote}
{\em How should one assess the centralities of nodes
   in an information/influence propagation process over
   a social network?
}
\end{quote}

Our framework systematically extends a family of classical
  graph-theoretical centrality formulations,
  including degree centrality,  harmonic centrality,
  and their ``sphere-of-influence'' 
 generalizations, to influence-based network centralities.
We further extend natural group centralities from 
  graph models to influence models, 
   since group cooperation is essential in social influences.
This in turn enables us to assess
  individuals' centralities in group influence settings
  by applying the concept of Shapley value from
  cooperative game theory.

Mathematically, using the property that these centrality formulations
   are Bayesian\footnote{Meaning that they are linear to the convex
combination of influence instances.},
   we prove the following characterization theorem:
Every influence-based centrality formulation in this family is the 
{\em  unique Bayesian centrality} that conforms with its
  corresponding graph-theoretical centrality formulation.
Moreover, the uniqueness is fully determined by 
  the centrality formulation on the class of layered graphs,
  which  is derived from a beautiful algebraic
  structure of influence instances modeled by cascading sequences.
Our main mathematical result that layered graphs in fact form a 
  basis for the space of influence-cascading-sequence profiles
   could also be useful in 
  other studies of network influences.
We further provide an algorithmic framework for efficient
  approximation of  these influence-based centrality measures.  

Our study provides a systematic
  road map for comparative analyses of different
  influence-based centrality formulations, as well as 
  for transferring graph-theoretical concepts to 
  influence models.
\end{abstract}

%
%


\section{Introduction}\label{sec:Intro}

Network influence is a fundamental subject in network sciences
 \cite{RichardsonDomingos,DomingosRichardson,kempe03,borgatti06}.
It arises from vast real-world backgrounds, ranging 
  from epidemic spreading/control, to viral marketing, 
  to innovation,  and to political campaign.
It also provides a family of concrete 
   and illustrative examples for studying 
   network phenomena --- particularly regarding the interplay 
   between network dynamics and graph structures --- 
   which require solution concepts beyond 
   traditional graph theory \cite{CT17}.
As a result, network influence is a captivating subject for 
  theoretical modeling, mathematical characterization, 
  and algorithmic analysis
  \cite{RichardsonDomingos,DomingosRichardson,kempe03,ChenWeiBook}.

\wei{Old version introduce the cascading sequence and IC model here. I feel that
	it gets into too technical details too quickly, so moved them to the later part,
	at the beginning of the contribution section.}

In contrast to some network processes such as random walks, 
  network influence is defined not solely by the static
  graph structure. 
It is fundamentally defined by the interaction 
  between the {\em dynamic} influence models and 
  the {\em static} network structures.
Even on the same static network, different 
  influence propagation models --- such as the
  popular {\em independent cascade} 
  and {\em linear-threshold} models ---
  induce different underlying relationships 
  among nodes in the network.
The characterization of this interplay
  thus requires us to reformulate various fundamental 
  graph-theoretical concepts such as centrality, closeness, distance,
  neighborhood (sphere-of-influence), and clusterability,
  as well as to identify new concepts fundamental 
  to emerging network phenomena.

In this paper, we will study the following basic question in network
  science with focusing on {\em influence-based network centrality}.

\begin{quote}
{\em Is there a systematic framework to expand graph-theoretical concepts
  in network sciences?
}
\end{quote}

\subsection{Motivations}

Network centrality --- a basic concept in network analysis ---
  measures the importance and the criticality 
  of nodes or edges within a given network.
Naturally, as network applications vary --- 
  being Web search, internet routing, 
  or social interactions --- centrality formulations
  should adapt as well.
Thus, numerous centrality measures have been proposed, 
  based on 
    degree, 
    closeness, 
    betweenness,
    and 
    random-walks
    (e.g., PageRank) (cf.~\cite{NewmanBook})
  to capture the significance of nodes on the Web, 
  in the Internet, and within social networks.
Most of these centrality measures depend only on 
  the static graph structures of the networks.
Thus, these traditional centrality formulations
  could be inadequate for many real-world applications --- 
  including social influence,   viral marketing, and 
  epidemics control ---
  in which static structures are only part of the network
  data that define the dynamic processes.
Our research will focus on the following basic questions:

\begin{quote}
{\em 
How should we summarize influence data to capture the significance
   of nodes in  the dynamic propagation process defined 
  by an influence model?
  How should we extend graph-theoretic centralities to the influence-based centralities?
What does each centrality formulation capture? 
How should we comparatively evaluate different centrality formulations?
}
\end{quote}

At WWW'17, Chen and Teng \cite{CT17} 
   presented an axiomatic framework for 
   characterizing influenced-based network centralities.
Their work is  motivated by studies in multiple disciplines, including 
  social-choice theory \cite{ArrowBook}, 
  cooperative game theory \cite{Shapley53},
  data mining \cite{GhoshInterplay},
   and particularly by
   \cite{IntellectualInfluence} on measures of intellectual influence and
   \cite{PageRankAxioms} on PageRank.
They present axiomatic characterizations
    for two basic centrality measures: 
	(a) {\em Single Node Influence} (SNI) {\em centrality}, 
	which measures each node's significance
  by its influence spread;\footnote{The influence spread ---
               as defined in \cite{kempe03}  --- of 
		a group is the expected number 
		of nodes this group 
		can activate as the initial active set, called {\em seed set}.}
	(b) {\em Shapley Centrality}, which uses the Shapley value 
	of the influence spread function ---
	formulated based on a fundamental 
	cooperative-game-theoretical concept. 
Mathematically, the axioms are structured into two categories.

\begin{itemize}
\item {\bf Principle Axioms}: 
The set of axioms that 
  all desirable influenced-based centrality formulations
  should satisfy.
In \cite{CT17}, two principle axioms, 
  {\sf Anonymity} and  {\sf Bayesian}, are identified.
{\sf Anonymity} is an ubiquitous and exemplary principle axiom,
 which states that
  centrality measures should be preserved under
  isomorphisms among influence instances.
{\sf Bayesian} states that influence-based centrality is a linear measure
  for mixtures of influence instances.
\item 
{\bf Choice axioms:}
A (minimal) set of axioms that together with the principle
  axioms uniquely determine a given centrality formulation.
\end{itemize}

Such characterizations and the taxonomy of axioms
   precisely capture the essence of centrality formulations 
   as well as their fundamental differences.
In particular, the choice axioms 
  succinctly distill
  the comparative differences 
  between different centrality formulations.


\wei{I feel that the motivation and comparison with~\cite{CT17} is too detailed, and
	making the pace slower to get to our contribution. I tried to simplify it a bit,
	but the old version is kept as commented part.}

However, the axiomatic characterization in~\cite{CT17} 
	has two limitations preventing it to be generalized
	to study more influence-based centralities.
First, it makes a significant simplification of the influence process: 
	each influence instance $\I$ only captures the probability distributions of the
	final influenced nodes given any initial seed set,  which we refer as the seeds-targets
	(probabilistic) profile.
Essentially, it compressed out all intermediate steps in a cascading process
	and only takes the initial seed nodes and the final target nodes into account.
This simplification is enough to study centrality measures concerning the
	final influence spread of the diffusion model, but is inadequate for characterizing 
	influenced-based centrality measures that
	can capture the propagation details of network influences, such as neighborhood, closeness,
	sphere-of-influence centralities.
Second, its choice axioms are based on a family of {\em critical set instances}, which
	do not have a graph-theoretical interpretation.
This makes it less powerful in explaining the connection between graph-theoretical
	centralities and influence-based centralities.

%
%
\subsection{Our Contributions}

In this paper, we address both of the above issues in~\cite{CT17}
	and significantly 
  expand the characterization of
  influence-based network centrality.
First, influence instance $\I$ is now defined as the probabilistic
	profile on the more detailed {\em cascading sequences}, as formally defined below.


\begin{restatable}[Cascading Sequence]{mydef}{cascadesequence} \label{def:ICS}
	For a directed network $G = (V,E)$ with $n=|V|$,
	a set sequence $(S_0, S_1, \ldots, S_{n-1})$ 
	is a {\em cascading sequence}
	if it is both
	(1) {\bf Monotonic}: 
	$\emptyset \subset S_0\subset S_1 \subset \cdots \subset  S_{t-1} \subset S_t  = S_{t+1}
	= \cdots S_{n-1} \subseteq V$ for some $t=0, 1, \ldots, n-1$, and 
	(2) {\bf $G$-continuous}:
	for all $t\in [T]$, every node  in $\Delta_t = S_t-S_{t-1}$
	can be reached directly from some nodes in 
	$\Delta_{t-1}$ (where $\Delta_0 = S_0$).
\end{restatable}

In the above definition,  $S_t$ represents the set of network nodes
that become {\em active} by step $t$ during the propagation,
and $\Delta_t = S_t-S_{t-1}$ denotes the
set of nodes newly {\em activated} at step $t$.
Thus, the cascading sequence $(S_0, S_1, \ldots, S_{n-1})$ provides a layered structure
	starting from seed set $S_0$, similar to network broadcasting.


%

However, unlike broadcasting, the layered cascading sequences in 
network influence are formed {\em stochastically.}
In each time step, already activated nodes stochastically activate more nodes in the next
	step, and this stochastic propagation ends when no new nodes are activated in a step.
Therefore, when describing an influence instance, we need to specify the
	probabilistic distribution of the possible cascading sequences.
This is formally defined as influence profile below.
%
%
%

\begin{restatable}[Influence Profile]{mydef}{infprofile} \label{def:infinstance1}
	An {\em influence profile} (also called an {\em influence instance}) is 
	a tuple $\I=(V,E,P_{\I})$, where $G=(V,E)$ is a directed graph, and
	$P_{\I}: (2^V)^n \rightarrow \R$ is a probability profile
	defining a probability distribution of cascading sequences
	for every seed set $S_0 \subseteq V$.
	That is, $P_{\I}(S_0, S_1, \ldots, S_{n-1}) \in [0,1]$
	specifies  the probability 
	that the influence instance generates 
	cascading sequence $(S_0, S_1, \ldots, S_{n-1})$
	when given seed set $S_0$.
\end{restatable}

Then an influence-based centrality measure is defined as a mapping from the above
	influence profiles to real-valued vectors assigning centrality values to each node.
We further transfer the concept of graph distance to cascading distance for cascading sequences.
By using the cascading-sequence based influence profile and cascading
	distance, we are able to extend a 
	large family of graph theoretical centralities based on graph distances from individual nodes,
	such as degree, closeness, harmonic, 
	reachability centralities to influence based centralities.
We refer to them as the stochastic sphere-of-influence centralities, and use
	a generic distance function $f$ to summarize all of them.

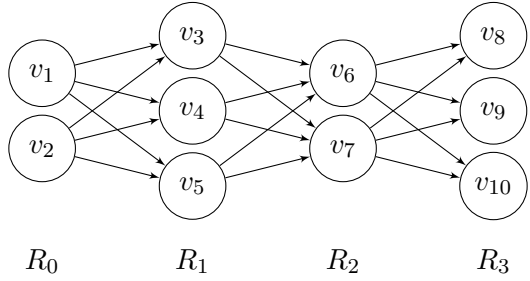
\begin{figure}[t]
	\centering
	\begin{tikzpicture}
	\tikzset{vertex/.style = {shape=circle,draw,minimum size=2.3em}}
	\tikzset{edge/.style = {->,> = latex'}}
	\node[vertex] (v01) at (0, -0.5) {$v_1$};
	\node[vertex] (v02) at (0, -1.5) {$v_2$};
	\node (l0) at (0, -3) {$R_0$};
	\node[vertex] (v11) at (2, 0) {$v_3$};
	\node[vertex] (v12) at (2, -1) {$v_4$};
	\node[vertex] (v13) at (2, -2) {$v_5$};
	\node (l1) at (2, -3) {$R_1$};
	\node[vertex] (v21) at (4, -0.5) {$v_6$};
	\node[vertex] (v22) at (4, -1.5) {$v_7$};
	\node (l2) at (4, -3) {$R_2$};
	\node[vertex] (v31) at (6, 0) {$v_8$};
	\node[vertex] (v32) at (6, -1) {$v_9$};
	\node[vertex] (v33) at (6, -2) {$v_{10}$};
	\node (l3) at (6, -3) {$R_3$};
	\draw[edge] (v01) to (v11);
	\draw[edge] (v01) to (v12);
	\draw[edge] (v01) to (v13);
	\draw[edge] (v02) to (v11);
	\draw[edge] (v02) to (v12);
	\draw[edge] (v02) to (v13);
	\draw[edge] (v11) to (v21);
	\draw[edge] (v11) to (v22);
	\draw[edge] (v12) to (v21);
	\draw[edge] (v12) to (v22);
	\draw[edge] (v13) to (v21);
	\draw[edge] (v13) to (v22);
	\draw[edge] (v21) to (v31);
	\draw[edge] (v21) to (v32);
	\draw[edge] (v21) to (v33);
	\draw[edge] (v22) to (v31);
	\draw[edge] (v22) to (v32);
	\draw[edge] (v22) to (v33);
	\end{tikzpicture}
	\caption{an example of a layered-graph instance with 4 layers $(R_0, R_1, R_2, R_3)$ where $R_0 = \{v_1, v_2\}$, $R_1 = \{v_3, v_4, v_5\}$, $R_2 = \{v_6, v_7\}$, $R_3 = \{v_8, v_9, v_{10}\}$.}
	\label{fig:layered-graph_instance}
\end{figure}

Second, we provide a key technical contribution of the paper, which characterizes
	all influence instances by layered graphs.
A layered graph is a directed graph with multiple layers and all nodes in one layer connect to
	all nodes in the next layer (Fig.~\ref{fig:layered-graph_instance}).
A layered graph instance is simply treating the graph as an influence instance following
	the breadth-first-search (BFS) propagation pattern.
Surprisingly, we show that the set of all layered-graph instances form the basis in the
	vector space of all influence profiles, meaning that every influence profile is
	a linear combination of layered-graph instances.
The result is instrumental to our centrality characterization, and is a powerful result
	by its own right.
	
The above layered-graph characterization allows us to connect influence propagation with
	static graphs.
Therefore, by combining the principle axioms (Axioms {\sf Anonymity} and {\sf Bayesian}),
	we are able to show that 
  our extended stochastic sphere-of-influence centrality 
	with a distance function $f$ is the unique centrality
  satisfying Axioms {\sf Anonymity} and {\sf Bayesian}   
 that	conforms with the corresponding 
  graph-theoretical centrality with the same distance function $f$, and
	the centrality is uniquely determined by their 
  values on layered-graph instances.
This characterization illustrates that 
	(a) our extension of graph-theoretical 
  centralities to influence-based centralities
  is not only reasonable but the only feasible  mathematical 
  choice, and
	(b) layered graphs are the key family of graphs comparing different influence measures,
	since a graph-theoretical centrality measure on layered graphs fully determines
	the conforming influence-based centrality measure.

The above centrality extension focuses on individual nodes.
As group 
  cooperation plays an important role in influence propagation, we further
	extend individual centralities to group centralities, which provide a value for every
	subset of nodes in the network.
Similar to individual centrality, we provide a characterization theorem showing that
	influence-based group centrality is also uniquely characterized by their values
	on layered-graph instances, as long as they satisfy the group version of 
	Axioms {\sf Anonymity} and {\sf Bayesian}.

A group centrality measure has $2^n$ dimensions, so we further use the Shapley value~\cite{Shapley53}
	in cooperative game theory to reduce it to $n$ dimensions, and refer to it as
	the influence-based Shapley centrality.
The Shapley centrality of a node measures its importance when the node collaborate with
	other nodes in groups.
Due to the linearity of the Shapley value, we obtain the same characterization for
	the Shapley centrality: it is the unique one conforming with the graph-theoretical Shapley centrality and satisfying Axioms {\sf Anonymity} and {\sf Bayesian}, and the uniqueness
	is fully determined by its values on layered-graph instances. 

Figure~\ref{fig:map} summarizes our systematic extension of graph-theoretical
	centralities (the lower three boxes) to influence-based centralities (the upper
	three boxes).
Starting from the classical graph-theoretical distance-based individual centrality (e.g.\ harmonic
	centrality), by transferring the concept of graph distance to cascading distance,
	we could lift it to the stochastic sphere-of-influence individual centrality 
	(e.g.\ influence-based harmonic centrality).
From individual centralities (either graph-theoretical or influence-based), we could use
	group distance to extend them to group centralities.
From group centralities, we could apply Shapley value to obtain Shapley centralities.
Therefore, Figure~\ref{fig:map} provides a road map on how to extend many 
	classical graph-theoretical
	centralities to influence-based centralities.

\begin{figure}[t] 
  {\includegraphics[width=1\linewidth]{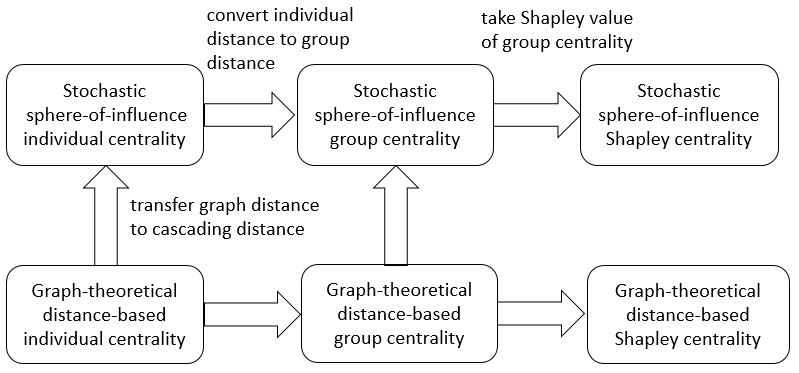}}
  \caption{Road map for the systematic extension of graph-theoretical 
    distance-based centralities to influence-based centralities. \label{fig:map}}
\end{figure}

In addition to studying the systematic framework and characterization of influence-based
	centralities, we also address the algorithmic aspects
   of these centralities.
We extend the approximation algorithm of~\cite{CT17} to cover all 
	stochastic sphere-of-influence centralities introduced in this paper.
The algorithm efficiently provides estimates for the centrality values of all nodes
	with theoretical guarantees.

To summarize, our contributions include:
	(a) a systematic extension of graph-theoretical distance-based centralities
	to corresponding influence-based centralities, and further extending them to group 
	centralities and Shapley centralities;
	(b) a key algebraic characterization of influence profiles by layered-graph instances;
	(c) unique characterization of stochastic sphere-of-influence centralities by
	their layered graph centralities and the principle axioms; and
	(d) an efficient algorithmic framework that approximates influence-based centralities
	for all centralities measures proposed in the paper.


\subsection{Related Work}

Influence propagation is an important topic in network science and
	network mining.
One well studied problem on influence propagation is the influence maximization 
	problem~\cite{domingos01,richardson02,kempe03}, which is to find $k$ seed nodes that
	generate the largest influence in the network.
Influence maximization has been extensively studied for improving its efficiency or
	extending it to various other settings (e.g.~\cite{kempe05,Leskovec07,BharathiKS07,ChenWW10,ChenYZ10}).
However, influence maximization is different from the study of individual centrality
	as pointed out by~\cite{borgatti06}.
Putting into our group centrality context, influence maximization can be viewed as the task
	of finding the group with the largest reachability group centrality.
Our efficient centrality approximation algorithm is inspired from the scalable
	influence maximization based on the reverse-reachable set approach ~\cite{BorgsBrautbarChayesLucier,tang14,tang15}.

Network centrality has been extensively studied and numerous centrality measures
	are proposed~(e.g.~\cite{NewmanBook,Bonacich1972,Katz,Bonacich1987power,PageRank,EverettBorgattiGroup}).
As discussed in the introduction, most centrality measures 
	(including group centrality~\cite{EverettBorgattiGroup} and Shapley
	centrality~\cite{ShapleyValueForCentrality1,GomezShapley})
	are graph theoretical, and
	only focus on the static graph structure.
A recent study by Chen and Teng~\cite{CT17} is the first systematic study combining
	network centrality with influence propagation.
As already detailed in the introduction, our study is motivated by~\cite{CT17}, and we aim
	at overcoming the limitations in~\cite{CT17} and extending the study of influence-based
	centrality to much wider settings. 
In particular, the SNI centrality and Shapley centrality studied in~\cite{CT17} are
	two instances related to reachability in the family of 
	sphere-of-influence centrality measures we cover in this paper.
	
Axiomatic approach has been used to characterize network centrality
	~\cite{Sabidussi66,Nieminen73,BV14,SB16,PageRankAxioms}.
These characterizations are mainly for graph-theoretical centralities.
Again, the study in~\cite{CT17} provides the first axiomatic characterization for
	a couple of influence-based centralities.
The characterization we provide in this paper is novel in the sense that
	it connects general influence profiles with a family of classical layered graphs
	so that our characterization can be based on graph-theoretical centralities.

\section{Algebraic Characterization of Network Influences}

In this section, we present our main technical result --- 
  a surprising discovery during our research --- which provides 
   a graph-theoretical characterization of the space of 
  influence profiles.
Specifically, we identify a simple set of classical graphs,
  and prove that when treated as BFS propagation instances,
  they form a linear basis in the space of
  all stochastic cascading-sequence profiles.  
This graph-theoretical characterization of influence models
  is instrumental to our systematic characterization
  of a family of influence-based network centralities.
Moreover, we believe that this result 
  is also important on its own right, and is potentially useful
	in other settings studying general influence propagation models.


\subsection{Stochastic Diffusion Model}

Stochastic diffusion models describe 
  how information or influence are propagated through a network.
A number of models have been 
  well-studied~(cf.~\cite{kempe03,ChenWeiBook}), and among them 
  independent cascade (IC) and 
  linear threshold (LT) models are most popular ones.
Here, we illustrate the stochastic profile of cascading sequences 
  with  the {\em triggering model}
  of Kempe-Kleinberg-Tardos~\cite{kempe03},
  which includes IC and LT models as special cases.
Triggering model will also be the subject of our algorithmic study.

In a triggering model, the static network structure
  is modeled as a directed graph $G=(V,E)$ with $n=|V|$.
Each node $v\in V$ has a random {\em triggering set} 
  $T(v)$ drawn from distribution $D(v)$
  over subsets of $v$'s in-neighbors $N^-(v)$.
At time $t=0$, triggering sets of 
  all nodes are sampled from 
  their distributions, and nodes in a given
	seed set $S\subseteq V$ are activated.
At any time $t\ge 1$, a node $v$ 
  is activated if some nodes in its triggering set $T(v)$ was 
	activated at time $t-1$.
The propagation continues until no new nodes are activated in a step.

In general, we can describe propagation in networks 
  as a sequence of node activations, as described in the introduction.
For convenience, we restate Definition \ref{def:ICS} here.

\cascadesequence*

Note that $S_t$ in a cascading 
  sequence represents the nodes that are active by time $t$.
The monotonicity requirement in the above 
  definition means that (a) active nodes will not be deactivated.
  and (b) each step should have at least one 
  new active node unless the cascade stops.
This corresponds to the {\em progress} 
  diffusion model in the literature.
Since $S_0$ cannot be empty, and 
  the set must grow by at least one node in each cascading step 
   we have at most $n-1$ cascading steps in the sequence.
The $G$-continuous condition means 
  that the activation of any new node in $\Delta_t$ at time $t$ must be
  partially due to the activation of some nodes 
  in $\Delta_{t-1}$ at the previous time step.
At this level of abstraction, 
  a stochastic diffusion model can be viewed as a probabilistic mechanism
  to generate cascading sequences.
Similar to~\cite{ChenWeiBook}, in this paper we use the distribution of the cascading sequences
	as the general specification of the diffusion model, 
	as defined in Definition~\ref{def:infinstance1} and restated here.
\infprofile*
Note that (1) if $(S_0, S_1, \ldots, S_{n-1})$ 
  is not a valid cascading sequence,
  then $P_{\I}(S_0, S_1, \ldots, S_{n-1})=0$.
(2) For every $S_0$,
 $\sum_{S_1, \ldots, S_{n-1}}P_{\I}(S_0, S_1, \ldots, S_{n-1}) = 1$.
We also use the notation $P_{\I}(S_0)$ to denote the distribution of 
	sequence $(S_1, \ldots, S_{n-1})$ starting from $S_0$.

In influence propagation models, 
  one key metric is the {\em influence spread} of a seed set $S$,
  which is the expected number of nodes activated when $S$ is the seed set.
For an influence profile $\I=(V,E,P_{\I})$, we can define influence spread
	$\sigma_{\I}(S) = \sum_{S_1, \ldots, S_{n-1}} P_{\I}(S, S_1, \ldots, S_{n-1}) \cdot |S_{n-1}|$.

We remark that in \cite{CT17}, a more coarse-grained seed-target profile is used to
study two influence-based centrality measures.
The seed-target profile is only suitable for centrality measures addressing the final influence spread,
	but is not detailed enough to study other centrality extensions including extensions to
	degree, closeness, harmonic centralities, etc.
Therefore, in this paper, we focus on the cascading sequence profile.

%
%
Note that for any directed graph $G=(V,E)$, 
we can equivalently interpret it as a diffusion model, where
the diffusion is carried out by the breadth-first search (BFS).
In particular, for any seed set $S_0$,
we have a deterministic cascading sequence
$(S_0, S_1, \ldots, S_{n-1})$, 
where $S_t$ is all the nodes that can be reached within $t$ steps of BFS.
Thus, the probability profile $P_{\I}$ is 
such that only this BFS sequence has probability $1$, and
all other sequences starting from $S_0$ have probability $0$.
We call this instance the {\em BFS influence instance} 
corresponding to graph $G$, and denote it as $\I^{({\rm BFS})}_G$.

\subsection{A Graph-Theoretical Basis of Influence Profiles}

Mathematically, each influence profile $\I$ of stochastic cascading sequences
  as defined in Definition~\ref{def:infinstance1}, 
  can be represented as a vector of probabilities 
  $\left[ P_{\I}(S_0, S_1, \ldots, S_{n-1})\right]$
  over monotonic cascading sequences.
In other words, the vector contains entries for each monotonic
  cascading sequences.
Because for each set $S_0$,
    all valid cascading sequences add up to $1$, 
    so one entry is redundant. 
We remove the entry $P_{\I}(S_0, S_0, \ldots, S_0)$ from the vector,
  and express it implicitly.
The resulting vector certainly has an exponential 
  number of dimensions, as there are exponential number of monotonic
  set sequences.
We use $M$ to denote its dimension.


The set of ``basis'' graphs are the layered graphs, 
  as depicted in Figure~\ref{fig:layered-graph_instance}.
Formally, 
  for a vertex set $V$, for an integer $t \geq 0$, and  $t+1$ disjoint 
 nonempty subsets $R_0, R_1, \ldots, R_t \subseteq V$,
 a {\em layered graph} $L_V(R_0, \dots, R_t)$ 
  is a directed graph in which every node in $R_{i-1}$
  has a directed edge pointing to every node in $R_{i}$, 
  for $i\in [t]$, and
	the rest nodes in 
  $V \setminus \cup_{i=1}^t R_i$ are isolated nodes 
  with no connections to and from any other nodes.
We say that 
  the BFS influence instance of the layered 
  graph $L_V(R_0, \dots, R_t)$, namely
	$\I^{({\rm BFS})}_{L_V(R_0, \dots, R_t)}$, is a 
  {\em layered-graph instance}, and for convenience we also
	use $\I_V(R_0, \dots, R_t)$ to denote this instance.
When the context is clear, we ignore $V$ in the subscript.
A trivial layered graph instance is when $t=0$, 
  in which case all nodes are isolated and there
	is no edge in the graph.
We call this the {\em null influence instance}, 
  and denote it as $\I^N$ (or $\I_V(V)$ to make
	it consistent with the layered-graph notation).
Technically, in $\I^N$, 
  only $P_{\I^N}(S_0 , S_0, \ldots, S_0) =1$,
   and all other probability values are $0$, which means its
   corresponding vector form is the all-zero vector.


Let $\cL = \{\I_V(R_0, \dots, R_t) \mid t=1, \ldots, n-1, \emptyset \ne R_i \subseteq V,  
	\mbox{all $R_i$'s are disjoint} \}$, i.e. $\cL$ is the set of all nontrivial layered-graph instances
	under node set $V$.

As a fundamental characterization 
  of the mathematical space of influence profiles, we prove the 
  following theorem, which states 
  that all nontrivial layered-graph instances
  form a linear basis in the space of all cascading-sequence based
influence instances:

\begin{restatable}[Graph-Theoretical Basis]{mythm}{layeredgraphbasis} 
\label{thm:basis}
The set of vectors corresponding to the 
  nontrivial layered-graph instances in $\cL$ forms a basis
	in $\R^M$.
\end{restatable}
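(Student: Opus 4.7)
My plan is to prove Theorem~\ref{thm:basis} in two phases. First I would establish that $|\cL| = M$ via a natural bijection, and then show that the $M \times M$ incidence matrix $A$ with $A_{\pi, P} := v_P(\pi)$ (rows indexed by nontrivial cascading sequences $\pi$, columns by layered-graph instances $P \in \cL$) is invertible by exhibiting a block upper-triangular structure.

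For the dimension count, both $\cL$ and the set of nontrivial cascading sequences are in natural bijection with ordered set partitions of nonempty subsets of $V$ into at least two parts. A layered-graph instance $\I_V(R_0, \ldots, R_t)$ with $t \geq 1$ corresponds to the ordered tuple $(R_0, \ldots, R_t)$, and a cascading sequence $(S_0, S_1, \ldots, S_{n-1})$ with stopping time $t \geq 1$ corresponds to its increments $(\Delta_0, \Delta_1, \ldots, \Delta_t)$, where $\Delta_0 = S_0$ and $\Delta_i = S_i \setminus S_{i-1}$ for $i \geq 1$; both are ordered partitions into $t+1 \geq 2$ nonempty parts.

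For the invertibility, I would order partitions lexicographically by the tuple (number of post-seed layers $t$ ascending; \emph{middle tail} $(R_1, \ldots, R_{t-1})$ under an arbitrary total order; $R_t$ under a total order refining set-size ascending; $R_0$ arbitrarily). A case analysis of BFS propagation in $L_P$ yields three structural facts that together imply block upper-triangularity: BFS from any seed in $L_P$ produces a sequence of length at most $t(P)$, so $A_{\pi, P} = 1$ forces $t(\pi) \leq t(P)$; within a $t$-block, $A_{\pi, P} = 1$ forces $R_i(\pi) = R_i(P)$ for $1 \leq i \leq t-1$ (the middle tail is preserved); and the BFS output satisfies $R_t(\pi) = R_t(P) \setminus R_0(\pi) \subseteq R_t(P)$. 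Each diagonal block, obtained by fixing $t$, the middle tail, and $R_t = Y$, is then indexed by nonempty $R_0 \subseteq W \setminus Y$, where $W := V \setminus \bigcup_{i=1}^{t-1} R_i$, and has entry $[R_0(\pi) \cap R_0(P) \neq \emptyset]$ --- the pairwise-intersection matrix $B$ on nonempty subsets of $W \setminus Y$.

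To finish, I would prove $B$ is invertible over $\R$ via inclusion-exclusion: expanding $B_{S, T} = 1 - \prod_{w \in S}(1 - [w \in T]) = \sum_{\emptyset \neq A \subseteq S}(-1)^{|A|+1}[A \subseteq T]$ factors $B = CZ$, where $Z_{A, T} = [A \subseteq T]$ is the zeta matrix of the nonempty-subset lattice (invertible by M\"obius inversion) and $C$ is a diagonal sign-twist of $Z^\top$ (also invertible). The main obstacle I expect is the case analysis for middle-tail preservation in step two above: BFS can \emph{collapse} layers when the seed intersects multiple non-outer layers of $L_P$, producing outputs of strictly smaller length. Showing that the output retains full length $t$ requires identifying precisely when the seed's intersection pattern $I := \{i : R_0(\pi) \cap R_i(P) \neq \emptyset\}$ satisfies $I \subseteq \{0, t\}$ with $0 \in I$ and $R_t(P) \not\subseteq R_0(\pi)$, which demands careful tracking of shortest-path activation times in the layered graph.
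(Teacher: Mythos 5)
Your proposal is correct and would give a complete proof. It rests on the same two pillars as the paper's argument --- the dimension count $|\cL|=M$ via the increment bijection, and an inclusion--exclusion over subsets of the first layer that inverts the pairwise-intersection pattern --- but packages them differently: the paper argues by contradiction, picking an extremal instance $\I(R_0^*,\dots,R_{t^*}^*)$ with nonzero coefficient (maximizing first $t^*$, then $|R_0^*|$) and evaluating the alternating functional $\sum_{\emptyset\neq S_0\subseteq R_0^*}(-1)^{1+|S_0|}P(\cdot)$, which is exactly one row of the inverse of your intersection matrix $B=Z^{\top}DZ$, whereas you exhibit the full block-triangular structure and invert every diagonal block at once by M\"obius inversion. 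The one substantive difference is your third structural fact, $R_t(\pi)=R_t(P)\setminus R_0(\pi)$, together with the ordering of $R_t$ by size ascending: this explicitly accounts for columns whose last layer strictly contains the sequence's last increment, which arise when the seed meets the instance's last layer. The paper's displayed derivation retains only instances with all layers $1,\dots,t^*$ equal to the starred ones, and its extremal choice does not obviously annihilate the coefficients of instances of the form $\I(R_0,R_1^*,\dots,R_{t^*-1}^*,R_{t^*}^*\cup Z)$ with $\emptyset\neq Z\subseteq R_0^*$, which do assign probability one to the test sequences $\mathit{Seq}_{\I^*}(S_0)$ whenever $Z\subseteq S_0$ and $S_0\cap R_0\neq\emptyset$; so your more systematic bookkeeping is not wasted effort --- it makes the argument airtight exactly where the published proof is terse (the published argument can be repaired by also maximizing $|R_{t^*}^*|$ in the extremal choice, which is the contradiction-style counterpart of your $R_t$-ordering). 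Do carry out the BFS case analysis you flag, i.e., that a full-length output forces the seed's intersection pattern to lie in $\{0,t\}$ with $0$ present and $R_t(P)\not\subseteq R_0(\pi)$: that is indeed where the real work lies, and it is needed both for the middle-tail preservation and for identifying the diagonal-block entries as $[R_0(\pi)\cap R_0(P)\neq\emptyset]$.
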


Although the proof of this theorem is quite technical, its underlying
principle is quite basic.  Here we provide some intuitions.  
Note first that $M = |\cL|$.
Thus, the central argument in the proof is to 
  show that elements in $\cL$ are independent,
  which we will establish using proof-by-contradiction: 
Suppose the profiles
  corresponding to layered graphs are not independent.
 That is, there are not-all-zero coefficients $\lambda_\I$ such that a linear
combination (denoted by $P = \sum_{\I \in \cL} \lambda_\I P_{\I}$) of these profiles is zero. 
We consider a carefully-designed inclusion-exclusion 
  form linear combination of the entries of $P$ and
  show that this combination is exactly some $\lambda_\I \ne 0$, which
means $P \ne \mathbf{0}$.

\wei{We may need some explanation of the proof for this main technical contribution.
	Hanrui, can you draft one for this?}

%
%
%
%
%
%

\section{Influence-based Centrality: Single Node Perspective}
\label{sec:centralitydef}

Recall that a graph-theoretical centrality, 
  such as degree, distance, PageRank, and betweenness,  summarizes
  network data to measure the importance of each node
  in a network structure.
Likewise, the objective of influence-based centrality formulations
  is to summarize the network-influence data in order to 
  measure the importance of every node in influence diffusion processes.
Formally (in the cascading-sequence model): 
\begin{mydef}[Influence-based Centrality Measure] 
\label{def:infcentrality}
An {\em influence-based centrality measure} 
 $\psi$ is a mapping from an influence profile
  $\I = (V,E,P_{\I})$,
  to a real-valued vector $(\psi_v(\I))_{v\in V} \in \R^n$.
\end{mydef}

The objective is to formulate network centrality measures that reflect dynamic
  influence propagation.
Theorem \ref{thm:basis} lays the foundation
  for a systematic framework 
  to generalize graph-theoretical centrality formulation
  to network centrality measures.
To this end, we first examine a unified centrality family that 
  are natural for layered graphs.

\subsection{A Unified Family of Sphere-of-Influence Centralities}


In this subsection, we discuss a family of graph-theoretical
  centrality measures that contains various forms of ``sphere-of-influence''
  and closeness centralities. 
These centrality measures have a common feature: 
  the centrality of node $v$ is fully determined 
  by the distances from $v$ to all nodes.

Consider a directed graph $G=(V,E)$. 
Let $N_G^+(v)$ and $N_G^-(v)$ denote the 
  set of out-neighbors and in-neighbors, 
  respectively, of a node $v$.
Let $d_G(u,v)$ be the graph distance from $u$ to $v$ in $G$.
If $v$ is not reachable from $u$ in $G$ then we set $d(u,v)=\infty$.
Let $d_G(S,v) = \min_{u\in S} d_G(u,v)$ be the distance 
  from a subset $S\subseteq V$ to node $v$.
Let $\Gamma_G(S)$ be the set of nodes reachable from $S\subseteq V$ in $G$.
When the context is clear, we would remove 
  $G$ from the subscripts in the above notations.

Recall that a graph-theoretical centrality measure 
  is a mapping $\mu$ from a graph $G$ to a real-valued
  vector $(\mu_v(G))_{v\in V} \in \R^{n}$, 
  where $\mu_v(G)$ denote the centrality of $v$ in $G$.

For every $S\subseteq V$ and  $v \in V$, let $\vec{d}_G(S)$ be 
   the vector in $\R^n$ consisting of the distance from $S$
	to every node $u$, i.e. $\vec{d}_G(S) = (d(S,u))_{u\in V}$.
Let $\vec{d}_G(v) = \vec{d}_G(\{v\})$.
We use $\R_{\infty}$ to denote $\R \cup \{\infty\}$.
For each $f: \R_{\infty}^n \rightarrow \R$, we can define: 
\begin{mydef}[Distance-based Centrality]\label{def:DC}
A distance-based centrality $\mu^{\Individual}[f]$ 
  with function $f: \R_{\infty}^n \rightarrow \R$
is defined as $\mu^{\Individual}[f]_v(G) = f(\vec{d}_G(v))$.
\end{mydef}

Definition \ref{def:DC} is a general formulation.
It includes several classical graph-theoretical centrality
  formulations as special cases:
(a) The {\em degree centrality} (or immediate sphere of influence), 
  $\mu^{\IndDegree}$, is 
  defined as the out-degree of a node $v$ in graph $G$,
	that is, $\mu_v^{\IndDegree}(G) = |N_G^+(v)|$.
It is defined by
  $f^{\Degree}(\vec{d}) = |\{u\in V \mid d_u = 1 \} |$.
(b) The {\em closeness centrality}, $\mu^{\Individual\text{-}\Closeness}$, 
  is defined as the reciprocal of the average distance
	to other nodes, 
   $\mu_v^{\Individual\text{-}\Closeness}(G) = 
  \frac{1}{\sum_{u\ne v} d_G(v,u)}$.
It is defined by  $f^{\Closeness}(\vec{d}) = 
\frac{1}{\sum_{u\in V} d_u}$.
If $G$ is not strongly connected, then 
   $\mu_v^{\Individual\text{-}\Closeness}(G) = 0$ 
  for any $v$ that cannot reach all
	other nodes, and thus closeness centrality is not expressive enough for such graphs.
(c) {\em harmonic centrality}, $\mu^{\Individual\text{-}\Harmonic}$, is defined:
$\mu_v^{\Individual\text{-}\Harmonic}(G) = \sum_{u\ne v} \frac{1}{d_G(v,u)}$.
It is defined by 
  $f^{\Harmonic}(\vec{d}) = \sum_{u\in V, d_u > 0} \frac{1}{d_u}$.
Note that harmonic centrality is closely related with closeness centrality,
 and is applicable to network with disjointed components.
(d) The {\em reachability centrality} measure 
   $\mu^{\Individual\text{-}\Reachability}$:
  $\mu_v^{\Individual\text{-}\Reachability}(G) = |\Gamma_{G}(\{v\})|$, 
  which means the reachability centrality of $v$
	is the number of nodes $v$ could reach in $G$.
It is defined by
  $f^{\Reachability}(\vec{d}) = |\{u\in V \mid d_u < \infty  \} |$.
(e) The {\em sphere-of-influence centrality} measure 
  $\mu^{\Individual\text{-}\SoI(\delta)}$:
For a threshold parameter $\delta$,
   $\mu_v^{\Individual\text{-}\SoI(\delta)} = |\{u: d_u \leq \delta\}|$.
It is defined by
  $f^{\SoI(\delta)}(\vec{d}) = |\{u\in V \mid d_u \le \delta \} |$.
Clearly, as $\delta$ varies from $1$ to $n-1$ (or $\infty$), the 
  sphere-of-influence centrality interpolates
  the degree centrality and the reachability centrality. 

\subsection{Stochastic Sphere-of-Influence: Lifting from Graph to Influence Models}




Thus, Definition \ref{def:DC} represents a unified
   family of sphere-of-influence centralities for graphs.
The function $f$ --- which is usually a non-increasing function of distance
  profiles --- captures the {\em scale} of the impact, based on the 
  distance of nodes from the {\em source}.
By unifying these centralities under one general centrality class, 
  we are able to systematically derive and  study their generalization 
  in the network-influence models.
The key step is to transfer the graph distance in directed graph 
  to {\em cascading distance}
	in cascading sequences.
For any cascading sequence 
  $(S_0, S_1, \ldots, S_{n-1})$ starting from seed set $S_0$, 
  let $d_u(S_0, S_1, \ldots, S_{n-1})= t$ 
  if $u\in \Delta_t = S_t \setminus S_{t-1}$ ($\Delta_0=S_0$), 
  and $d_u(S_0, S_1, \ldots, S_{n-1})= \infty$ if $u \not\in S_{n-1}$.
We call $d_u(S_0, S_1, \ldots, S_{n-1})$ the {\em cascading distance} 
  from seed set $S_0$ to node $u$,
  since it represents the number of steps 
  needed for $S_0$ to activate $u$ in the cascading sequence.
Then, we define the cascading 
  distance vector $\vec{d}(S_0, S_1, \ldots, S_{n-1})$ 
  as $(d_u(S_0, S_1, \ldots, S_{n-1}))_{u\in V} \in \R^n$.

In particular, when we consider 
  a cascading sequence $(\{v\}, S_1, \ldots, S_{n-1})$
  starting from a single node $v$, 
  set $\Delta_1 = S_1 \setminus \{v\}$ can be 
  viewed as the out-neighbor of $v$,
  set $S_{n-1}$ can be viewed as the all nodes reachable from $v$,
  and for every node $u \in \Delta_t = S_t \setminus S_{t-1}$, 
  the distance from $v$ to $u$	is $t$.

\begin{mydef}[Individual Stochastic Sphere-of-Influence Centrality] 
  \label{def:distfunc}
For each function $f: \R_\infty^n \rightarrow \R$, 
   the influence-based individual stochastic sphere-of-influence centrality $\psi^{\Individual}[f]$
   is defined as:
\begin{align*}
&\psi^{\Individual}[f]_v(\I) = \E_{(S_1, \ldots, S_{n-1})\sim P_{\I}(\{v\})}[f(\vec{d}(\{v\}, S_1, \ldots, S_{n-1}))].
\end{align*}
\end{mydef}


Definition \ref{def:distfunc} systematic extended the family
  of graph-theoretical centralities of Definition \ref{def:DC} to
  influence models.
Natural influence-based centralities, e.g., the 
  single-node influence (SNI) centrality defined in~\cite{CT17}
  (using  each node $v$'s influence spread $\sigma_{\I}(\{v\})$
  as the measure of its influence-based centrality),
  can be expressed by this extension:
\begin{restatable}{myprop}{snireachability}
$\forall$ influence profile $\I = (V,E,P_{\I})$:
\[\SNI(\I) = \psi^{\Individual}[f^{\Reachability}](\I).\]
\end{restatable}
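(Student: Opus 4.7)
The plan is a direct unpacking of definitions, with the only substantive observation being that the reachability count under the cascading-distance vector coincides with the size of the terminal active set $S_{n-1}$.

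First I would fix $v \in V$ and write out $\SNI(\I)_v = \sigma_{\I}(\{v\})$ explicitly using the definition given just after Definition~\ref{def:infinstance1}, namely
\[
\sigma_{\I}(\{v\}) = \sum_{S_1, \ldots, S_{n-1}} P_{\I}(\{v\}, S_1, \ldots, S_{n-1}) \cdot |S_{n-1}|.
\]
Then I would write out the right-hand side $\psi^{\Individual}[f^{\Reachability}]_v(\I)$ using Definition~\ref{def:distfunc} and the explicit form $f^{\Reachability}(\vec{d}) = |\{u \in V \mid d_u < \infty\}|$, giving
\[
\psi^{\Individual}[f^{\Reachability}]_v(\I) = \sum_{S_1, \ldots, S_{n-1}} P_{\I}(\{v\}, S_1, \ldots, S_{n-1}) \cdot |\{u \in V \mid d_u(\{v\}, S_1, \ldots, S_{n-1}) < \infty\}|.
\]

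The key observation to reconcile the two expressions is the following identity of sets, which follows immediately from the definition of cascading distance: for any monotonic cascading sequence $(\{v\}, S_1, \ldots, S_{n-1})$,
\[
\{u \in V \mid d_u(\{v\}, S_1, \ldots, S_{n-1}) < \infty\} = S_{n-1}.
\]
Indeed, by definition $d_u = t$ if $u \in \Delta_t = S_t \setminus S_{t-1}$ and $d_u = \infty$ otherwise; since the $\Delta_t$'s partition $S_{n-1}$, we have $d_u < \infty$ iff $u \in S_{n-1}$. Taking cardinalities yields $f^{\Reachability}(\vec{d}(\{v\}, S_1, \ldots, S_{n-1})) = |S_{n-1}|$.

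Substituting this identity termwise into the expression for $\psi^{\Individual}[f^{\Reachability}]_v(\I)$ recovers exactly the formula for $\sigma_{\I}(\{v\})$, completing the proof. There is no real obstacle here: the proposition is a sanity check showing that the SNI centrality of \cite{CT17} fits into our unified sphere-of-influence family with the reachability witness $f^{\Reachability}$, and the only content is the equivalence ``$u$ is ever activated in the cascading sequence'' $\Leftrightarrow$ ``$u$ has finite cascading distance from the seed.''
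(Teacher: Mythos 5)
Your proposal is correct and follows essentially the same route as the paper's proof: both expand $\sigma_{\I}(\{v\})$ and $\psi^{\Individual}[f^{\Reachability}]_v(\I)$ from their definitions and use the observation that a node has finite cascading distance if and only if it lies in $S_{n-1}$. Your write-up is in fact slightly more careful, since it makes the set identity $\{u \mid d_u < \infty\} = S_{n-1}$ explicit rather than leaving it implicit in a chain of equalities.
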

The influence-based centrality formulations 
  of Definition \ref{def:distfunc} 
  enjoy the following graph-theoretical conformity property.
\begin{mydef}[Graph-Theoretical Conformity]
An influence-based centrality measure $\psi$
  conforms with a graph-theoretical centrality measure $\mu$
   if for every directed graph $G$, $\psi(\I^{({\rm BFS})}_G) = \mu(G)$.
\end{mydef}

\begin{restatable}{myprop}{conformdistfunc}
\label{prop:conformdistfunc}
For any function $f: \R_\infty^n \rightarrow \R$, 
 $\psi^{\Individual}[f]$  conforms with  $\mu^{\Individual}[f]$.
\end{restatable}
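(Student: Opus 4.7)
The plan is to unfold both sides of the claimed equality on an arbitrary directed graph $G = (V,E)$ and show that they reduce to the same scalar at every node $v$. Fix $v \in V$. On the influence side, $\psi^{\Individual}[f]_v(\I^{({\rm BFS})}_G)$ is an expectation over $P_{\I^{({\rm BFS})}_G}(\{v\})$; but by construction of the BFS instance, this distribution is supported on a single deterministic cascading sequence $(\{v\}, S_1^\star, S_2^\star, \ldots, S_{n-1}^\star)$, where $S_t^\star$ is the set of all nodes reachable from $v$ in at most $t$ BFS steps. Hence the expectation collapses to $f(\vec{d}(\{v\}, S_1^\star, \ldots, S_{n-1}^\star))$, and the task reduces to the vector identity
\begin{equation*}
\vec{d}(\{v\}, S_1^\star, \ldots, S_{n-1}^\star) \;=\; \vec{d}_G(v).
\end{equation*}

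The core step is to verify this identity coordinate by coordinate. For every $u \in V$, I would split into two cases according to whether $u$ is reachable from $v$ in $G$. If $u$ is reachable, let $t = d_G(v,u)$; by definition of BFS, $u \in S_t^\star \setminus S_{t-1}^\star$ (with the convention $S_{-1}^\star = \emptyset$), so $u \in \Delta_t$ and the cascading distance from $\{v\}$ to $u$ equals $t = d_G(v,u)$. If $u$ is not reachable from $v$, then $u \notin S_{n-1}^\star$ since BFS exhausts exactly the reachable set within $n-1$ steps, so by the definition of cascading distance we get $d_u = \infty = d_G(v,u)$. A brief sanity check is also needed that $(\{v\}, S_1^\star, \ldots, S_{n-1}^\star)$ is indeed a valid cascading sequence in the sense of Definition \ref{def:ICS}: monotonicity is immediate from the BFS construction, and $G$-continuity holds because any node newly added at step $t$ has an in-neighbor at BFS level $t-1$, i.e.\ in $\Delta_{t-1}$. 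Finally, applying $f$ to both sides of the vector identity yields
\begin{equation*}
\psi^{\Individual}[f]_v(\I^{({\rm BFS})}_G) \;=\; f(\vec{d}_G(v)) \;=\; \mu^{\Individual}[f]_v(G),
\end{equation*}
which is the conformity equation for the arbitrary node $v$.

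There is really no substantial obstacle here: the proposition is essentially a bookkeeping statement saying that BFS levels coincide with graph distances, combined with the observation that the BFS influence instance is a Dirac measure on a single sequence. The only place to be slightly careful is the treatment of unreachable nodes (so that both sides carry the value $\infty$ in the same coordinate) and the edge case where $v$ has no out-neighbors, in which case $S_t^\star = \{v\}$ for all $t$ and the identity holds trivially with $d_u = \infty$ for every $u \neq v$. I would write the argument in a single short paragraph in the final proof, since no induction or case analysis beyond the reachable/unreachable dichotomy is required.
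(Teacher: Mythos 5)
Your proposal is correct and follows essentially the same route as the paper's proof: the BFS instance's distribution is a point mass on the BFS sequence, so the expectation collapses, and the cascading distance vector of that sequence coincides coordinate-wise with the graph distance vector. Your write-up is simply more explicit about the reachable/unreachable case split, which the paper leaves implicit.
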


\subsection{Characterization of Influence-Based Centrality Measures}

%

Given the multiplicity of the (potential) centrality formulations,
  ``how should we characterize each formulation?'' and ``how should
    we compare different formulations?'' are fundamental questions
    in network analysis.
Inspired by the pioneering work of 
  Arrow \cite{ArrowBook} on social choice, 
  Shapley \cite{Shapley53} on cooperation games and coalition,
   Palacios-Huerta \& Volij  \cite{IntellectualInfluence} 
  on measures of intellectual influence, and
  Altman \& Tennenholtz  \cite{PageRankAxioms} on PageRank,
  Chen and Teng \cite{CT17} 
  proposed an axiomatic framework for characterizing and analyzing
  influence-based network centrality.
They identify two principle axioms that 
   all desirable influenced-based centrality formulations
   should satisfy.

\vspace*{0.05in}
\noindent{\sc Principle Axioms for Influenced-Based Centrality}
\vspace*{0.05in}

The first axiom --- ubiquitous axiom for centrality characterization, 
  e.g.~\cite{Sabidussi66}
   ---  states that
  labels on the nodes should have no effect on centrality measures.
\begin{axiom}[Anonymity] \label{axiom:anonymity}
For any influence instance $\I = (V,E,P_{\I})$, and permutation 
$\pi$ of $V$, 
\begin{eqnarray}\label{eqn:anonymity}
\psi_{v}(\I) = \psi_{\pi(v)}({\pi(\I)}), \quad \forall v \in V.
\end{eqnarray}
\end{axiom}
\vspace{-1mm}
In Axiom \ref{axiom:anonymity},
  $\pi(\I) = (\pi(V),\pi(E), \pi(P_{\I}))$
denotes the isomorphic instance:
(1) $\forall u,v\in V$,  $(\pi(u),\pi(v))\in \pi(E)$ if and only if  $(u,v)\in E$, and
(2) for any cascading sequence $(S_0, S_1, \ldots, S_{n-1})$, 
$P_{\pi(\I)}(\pi(S_0), \pi(S_1), \ldots, \pi(S_{n-1})) 
 =  P_{\I}(S_0, S_1, \ldots, S_{n-1})$.


The second axiom concerns {\em Bayesian social influence} 
   \cite{CT17}
  through a given network:
For any three influence profiles 
 $\I$, $\I_1$, $\I_2$ over  the same vertex set $V$,
  we say $\I$ is a {\em Bayesian} of $\I_1$ and $\I_2$
  if there exists $\alpha \in [0, 1]$ such that 
  $ P_{\I} = \alpha P_{\I_1} + (1 - \alpha) P_{\I_2}$.
In other words, $\I$ can be interpreted 
  as a stochastic diffusion model where we first 
  make a random selection --- with probability $\alpha$ 
  of model $\I_1$ and with probability $(1-\alpha)$ of model $\I_2$ --- 
  and then carry out the diffusion process according to the selected model.
We also say that $\I$ is a {\em convex combination} of $\I_1$ and $\I_2$.
The  axiom reflects the linearity-of-expectation principle.
If an influence
  instance is a convex combination of two other influence
  instances, the centrality value of a vertex 
  is the same convex combination of the corresponding centrality
  values in the two other instances.
\begin{axiom}[Bayesian]
For any $\alpha\in [0,1]$, for
   any influence profiles $\I$, $\I_1$ and $\I_2$ over common vertex set $V$
   such that
  $ P_{\I} = \alpha P_{\I_1} + (1 - \alpha) P_{\I_2}$,
\begin{eqnarray}\label{eqn:Bayesian}
   \psi_v(\I) = \alpha \psi_v(\I_1) + (1 - \alpha) \psi_v(\I_2), \quad 
  \forall v \in V.
\end{eqnarray}
\end{axiom}






\vspace*{0.05in}
\noindent{\sc Characterization of Influence-Based Centrality}
\vspace*{0.05in}

We now use Theorem \ref{thm:basis} and 
  Axioms {\sf Anonymity} and {\sf Bayesian}
  to establish a complete characterization  of the family of 
  stochastic sphere-of-influence centralities formulated in 
  Definition \ref{def:distfunc}.
  
\begin{restatable}{myprop}{distfuncanonymous}
\label{prop:distfuncanonymous}
If a function $f: \R_\infty^n \rightarrow \R$ is {\em anonymous}  --- i.e., 
  $f(\vec{d})$ is {\em permutation-invariant} ---
  then $\psi^{\Individual}[f]$ (as defined in Definition \ref{def:distfunc})
  satisfies  Axiom {\sf Anonymity}
  and  $\mu^{\Individual}[f]$ (as defined in Definition \ref{def:DC})
  satisfies the graph-theoretical counterpart of Axiom {\sf Anonymity}.
\end{restatable}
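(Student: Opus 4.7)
The plan is to verify both claims by direct computation, relying on a single equivariance observation for distance vectors under vertex permutations. The key structural fact is that the cascading distance function $d_u$ transforms covariantly with $\pi$: for any permutation $\pi$ of $V$, any cascading sequence $(S_0, \ldots, S_{n-1})$, and any $u \in V$, we have $d_{\pi(u)}(\pi(S_0), \ldots, \pi(S_{n-1})) = d_u(S_0, \ldots, S_{n-1})$, since $u \in S_t \setminus S_{t-1}$ iff $\pi(u) \in \pi(S_t) \setminus \pi(S_{t-1})$ (and likewise for $u \notin S_{n-1}$). The analogous identity $d_{\pi(G)}(\pi(u), \pi(w)) = d_G(u, w)$ for graph distances is immediate from the definition of $\pi(G)$. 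Consequently, the vector $\vec{d}(\pi(S_0), \ldots, \pi(S_{n-1}))$ is the coordinate permutation of $\vec{d}(S_0, \ldots, S_{n-1})$ by $\pi$, and similarly $\vec{d}_{\pi(G)}(\pi(v))$ is the coordinate permutation of $\vec{d}_G(v)$.

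With this lemma in hand, I would first verify Axiom {\sf Anonymity} for $\psi^{\Individual}[f]$. Expanding the definition at $\pi(v)$ on instance $\pi(\I)$ and reindexing the summation by $T_i \mapsto \pi(S_i)$ (a bijection on sequences), I obtain
\[
\psi^{\Individual}[f]_{\pi(v)}(\pi(\I)) = \sum_{S_1, \ldots, S_{n-1}} P_{\pi(\I)}(\pi(\{v\}), \pi(S_1), \ldots, \pi(S_{n-1})) \cdot f(\vec{d}(\pi(\{v\}), \pi(S_1), \ldots, \pi(S_{n-1}))).
\]
The probability factor collapses to $P_{\I}(\{v\}, S_1, \ldots, S_{n-1})$ by the defining identity of $\pi(\I)$, and by the equivariance lemma combined with the permutation-invariance of $f$, the $f$-factor becomes $f(\vec{d}(\{v\}, S_1, \ldots, S_{n-1}))$. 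The result is exactly $\psi^{\Individual}[f]_v(\I)$.

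The graph-theoretical counterpart of Axiom {\sf Anonymity} asserts $\mu_v(G) = \mu_{\pi(v)}(\pi(G))$, and here a one-line computation suffices: using the distance equivariance and the anonymity of $f$,
\[
\mu^{\Individual}[f]_{\pi(v)}(\pi(G)) = f(\vec{d}_{\pi(G)}(\pi(v))) = f(\vec{d}_G(v)) = \mu^{\Individual}[f]_v(G).
\]

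The main obstacle is purely notational bookkeeping: keeping the action of $\pi$ consistent on the seed argument $\{v\}$, on the summed-over cascading sequences, on the coordinates of the distance vector, and on the node label at which centrality is evaluated. There is no substantive mathematical content beyond the equivariance of distance under relabeling, so the work is in laying out the substitutions carefully rather than in overcoming any conceptual difficulty.
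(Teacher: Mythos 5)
Your proposal is correct and follows essentially the same route as the paper's proof: expand the definition at $\pi(v)$ on $\pi(\I)$, use the defining identity of the isomorphic instance together with the equivariance of the (cascading) distance vector under relabeling and the permutation-invariance of $f$, and collapse back to $\psi^{\Individual}[f]_v(\I)$. If anything, you are more explicit than the paper --- which compresses the whole argument into a two-line expectation identity and omits the graph-theoretical counterpart --- but the underlying argument is identical.
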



\begin{restatable}{myprop}{distfuncbayesian}
\label{prop:distfuncbayesian}
For any function $f: \R_\infty^n \rightarrow \R$, 
  $\psi^{\Individual}[f]$  satisfies  Axiom {\sf Bayesian}.
\end{restatable}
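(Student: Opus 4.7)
The plan is essentially to unpack the defining expectation, substitute the convex combination, and then apply linearity of the finite sum to re-fold the expression into the corresponding combination of centralities. There is no genuine obstacle here: the result is a direct consequence of the fact that $\psi^{\Individual}[f]$ depends on $P_{\I}$ linearly through the probability weights of the cascading sequences, and the function $f$ never mixes across instances.

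Concretely, first I would expand the expectation in Definition \ref{def:distfunc} as a finite sum
\[
\psi^{\Individual}[f]_v(\I) = \sum_{(S_1,\ldots,S_{n-1})} P_{\I}(\{v\}, S_1, \ldots, S_{n-1}) \cdot f(\vec{d}(\{v\}, S_1, \ldots, S_{n-1})),
\]
where the sum ranges over all monotonic set sequences starting at $\{v\}$. Note that for sequences that are not $G$-continuous, the probability weight is zero, so we do not have to worry about which instance's graph we sum over; the expression makes sense uniformly. Also, since $\I$, $\I_1$, and $\I_2$ share the common vertex set $V$, the distance vector $\vec{d}(\{v\}, S_1, \ldots, S_{n-1})$ is a purely combinatorial quantity that does not depend on which profile we are working with, so the $f$-values are identical across all three instances.

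Next, I substitute $P_{\I}(\{v\}, S_1, \ldots, S_{n-1}) = \alpha P_{\I_1}(\{v\}, S_1, \ldots, S_{n-1}) + (1-\alpha) P_{\I_2}(\{v\}, S_1, \ldots, S_{n-1})$ into the sum and split it into two sums by linearity. Re-identifying the two resulting sums as $\psi^{\Individual}[f]_v(\I_1)$ and $\psi^{\Individual}[f]_v(\I_2)$ respectively yields
\[
\psi^{\Individual}[f]_v(\I) = \alpha \, \psi^{\Individual}[f]_v(\I_1) + (1-\alpha) \, \psi^{\Individual}[f]_v(\I_2),
\]
which is exactly (\ref{eqn:Bayesian}). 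Since the argument holds for every $v \in V$ and every $\alpha \in [0,1]$, Axiom {\sf Bayesian} is satisfied. The only subtlety worth mentioning is to justify that the sum is indeed finite (there are at most a countable number of monotonic cascading sequences, and in fact only finitely many over a finite vertex set $V$), so interchanging sums and scalar multiplication is unproblematic.
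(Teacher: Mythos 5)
Your proof is correct and is essentially the same argument as the paper's: the paper phrases it as linearity of expectation under the mixture distribution, while you write the expectation out as a finite sum over cascading sequences and split it term by term, but the underlying reasoning is identical. Your added remarks (the distance vector is instance-independent and the sum is finite) are accurate and only make the same argument more explicit.
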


Theorem~\ref{thm:basis}  shows that all influence profiles
   can be represented as a linear combination
   of nontrivial layered-graph instances.
The result enables us to study and 
   compare centrality measures by looking at their instantiation
   in the simple layered graph instances.
The Bayesian property together with the linear basis of nontrivial
  layered-graph instances leads to the following characterization theorem.
\begin{restatable}{mythm}{uniquedetermine}
\label{thm:unique_determination}
A Bayesian influence-based centrality measure is 
  uniquely determined by its values on 
  layered-graph instances (including the null instance).
\end{restatable}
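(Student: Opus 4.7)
The plan is to leverage Theorem~\ref{thm:basis} to reduce the evaluation of any Bayesian centrality on an arbitrary instance to its values on $\cL \cup \{\I^N\}$. I would fix two Bayesian centralities $\psi$ and $\psi'$ that agree on every layered-graph instance and on $\I^N$, pick an arbitrary influence profile $\I$, and derive $\psi(\I) = \psi'(\I)$ by re-expressing $\I$ in the basis and then invoking Axiom~{\sf Bayesian}. First, I would write $P_\I = \sum_{L \in \cL} \lambda_L P_L$ via Theorem~\ref{thm:basis}; this identity holds in the $M$-dimensional vector representation that omits the ``stay-put'' entries $P_{\cdot}(S_0, S_0, \ldots, S_0)$. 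A short bookkeeping argument, using that $P_{\I^N}$ places mass $1$ exactly on these stay-put entries and $0$ elsewhere, lifts the identity to an equality of full probability profiles $P_\I = \sum_{L \in \cL} \lambda_L P_L + c\, P_{\I^N}$ with $c = 1 - \sum_L \lambda_L$, so that the coefficients now sum to $1$.

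Next, I would turn this signed combination into something usable by Axiom~{\sf Bayesian}. Partition $\cL$ into the sets $\Lambda^+$ and $\Lambda^-$ of basis instances with positive and negative coefficients, assign $c\, P_{\I^N}$ to whichever side matches the sign of $c$, and move all negative-coefficient terms to the opposite side. This yields an identity between two nonnegative combinations of profiles drawn from $\{\I\} \cup \cL \cup \{\I^N\}$ whose coefficient totals are equal; normalizing by this common total expresses the same influence profile as two distinct convex combinations. A routine induction extends Bayesian from two-term to arbitrary finite convex combinations, so applying $\psi$ to both convex expressions and solving for $\psi(\I)$ yields the explicit affine formula $\psi(\I) = \sum_{L \in \cL} \lambda_L\, \psi(L) + c\, \psi(\I^N)$. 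The same formula holds for $\psi'$, and since the two centralities agree on every $L \in \cL$ and on $\I^N$, we conclude $\psi(\I) = \psi'(\I)$.

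The main obstacle, and the step that demands real care, is the translation between the $M$-dimensional basis expansion (a linear identity over $\R$ allowing arbitrary signs) and Axiom~{\sf Bayesian}, which only speaks of convex combinations of valid probability profiles. The null instance is the key bridge: since $P_{\I^N}$ is the zero vector in the $M$-dimensional representation but the all-stay-put distribution in full form, it acts as a ``slack'' term that absorbs any affine deficiency and lets arbitrary real coefficients be rebalanced into a convex decomposition. Edge cases in which $\Lambda^+$, $\Lambda^-$, or the normalizing constant vanish can all be folded into this $\I^N$ slack, so the argument closes cleanly once the bridge is set up correctly.
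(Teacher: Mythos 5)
Your proposal is correct and follows essentially the same route as the paper: expand $P_\I$ in the layered-graph basis of Theorem~\ref{thm:basis}, use the null instance (the zero vector in the $M$-dimensional representation) to make the coefficients sum to one, and convert the signed affine identity into two convex decompositions to which Axiom {\sf Bayesian} applies. The only difference is that the paper outsources this last convexity-to-affine-linearity step to Lemma 11 of \cite{CT17} (restated as Lemma~\ref{lem:linearmap}), whereas you prove it inline; the sign-splitting and normalization bookkeeping you describe is exactly that lemma's proof.
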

Since layered-graph instances are all BFS 
  instances derived from a special
  class of directed graphs, 
  a direct implication of Theorem~\ref{thm:unique_determination}
  is that any Bayesian centrality conforming 
  with a classical graph-theoretical centrality is unique.
Therefore, we have:
\begin{restatable}[Characterization of Individual Centrality]{mythm}{IndividualCharacterization}
\label{thm:IndividualCharacterization}
For any anonymous function $f: \R_\infty^n \rightarrow \R$, 
  $\psi^{\Individual}[f]$  (defined in Definition~\ref{def:distfunc})
   is the unique influence-based centrality
   that conforms with $\mu^{\Individual}[f]$  
  (defined in Definition~\ref{def:DC})
   that satisfies both Axiom {\sf Anonymity}   and Axiom {\sf Bayesian}.
\end{restatable}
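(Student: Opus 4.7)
The plan is to decompose the proof into existence and uniqueness. Existence amounts to checking that $\psi^{\Individual}[f]$ itself satisfies all three stipulated properties, and this is immediate from the already-established propositions: Proposition \ref{prop:distfuncanonymous} gives Axiom {\sf Anonymity} (this is where the hypothesis that $f$ is anonymous gets used), Proposition \ref{prop:distfuncbayesian} gives Axiom {\sf Bayesian}, and Proposition \ref{prop:conformdistfunc} gives conformity with $\mu^{\Individual}[f]$.

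For uniqueness, I would let $\psi$ be an arbitrary Bayesian, anonymous centrality that conforms with $\mu^{\Individual}[f]$, and show $\psi = \psi^{\Individual}[f]$ on every influence profile. The key observation is that every layered-graph instance $\I_V(R_0,\dots,R_t)$ is by definition the BFS instance $\I^{({\rm BFS})}_{L_V(R_0,\dots,R_t)}$ of its underlying layered graph. Conformity therefore forces
\[
\psi\bigl(\I_V(R_0,\dots,R_t)\bigr) \;=\; \mu^{\Individual}[f]\bigl(L_V(R_0,\dots,R_t)\bigr),
\]
and the same equality holds for $\psi^{\Individual}[f]$ by Proposition \ref{prop:conformdistfunc}. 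Hence $\psi$ and $\psi^{\Individual}[f]$ agree on every layered-graph instance in $\cL$; applying the same argument to the edgeless graph $(V,\emptyset)$ also covers the null instance $\I^N = \I_V(V)$. Since both centralities are Bayesian, Theorem \ref{thm:unique_determination} immediately yields $\psi = \psi^{\Individual}[f]$ on every influence profile.

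There is no substantive obstacle remaining at this stage — all the hard work is already encapsulated in Theorem \ref{thm:basis} (layered-graph instances form a linear basis for the space of cascading-sequence profiles) and its direct corollary Theorem \ref{thm:unique_determination}. With that algebraic infrastructure in place, the present characterization reduces to routine bookkeeping: conformity pins down the values of $\psi$ on the basis elements, and the Bayesian axiom then propagates this equality to every profile by linearity. I would also remark, as a small sanity check on the hypotheses, that Axiom {\sf Anonymity} plays no role in the uniqueness half of the argument — it appears in the statement only to ensure that $\psi^{\Individual}[f]$ itself belongs to the class being characterized, which in turn is exactly what requires $f$ to be permutation-invariant.
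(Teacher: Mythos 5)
Your proposal is correct and follows essentially the same route as the paper: existence from Propositions \ref{prop:distfuncanonymous}, \ref{prop:distfuncbayesian}, and \ref{prop:conformdistfunc}, and uniqueness because conformity pins down the values on layered-graph (BFS) instances, after which Theorem \ref{thm:unique_determination} propagates the equality to all profiles. Your explicit write-up of the uniqueness half, and the remark that {\sf Anonymity} is only needed for membership in the characterized class, simply spell out what the paper leaves implicit.
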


%

The above uniqueness result show that 
  our generalized definitions of influence-based
  centralities Definition~\ref{def:distfunc} are not only reasonable
  but the only feasible choice (to the extent of Bayesian centralities).

\section{Influence-based Centrality: Group Perspective and Shapley Centrality}


As highlighted in
  Domingos-Richardson \cite{RichardsonDomingos,DomingosRichardson}
  and Kempe-Kleinberg-Tardos \cite{kempe03},
  social influence propagation and viral marketing
  are largely group-based phenomena.
Besides characterizing individuals' influential centralities,
  perhaps the more important task is to characterize the influential
  centrality of  groups, and individuals' roles in group cooperation.
This is the group centrality and Shapley centrality 
  introduced in this section.
When distinction is necessary, 
  we refer to the centrality defined 
  in Section~\ref{sec:centralitydef} as {\em individual centrality}.

\subsection{Group Centrality}

Group centrality measures the importance of each group in a network.
Formally, 
\begin{mydef}[Influence-based Group Centrality] 
  \label{def:groupcentrality}
An {\em influence-based group centrality measure} 
 $\psi^{\Group}$ is a mapping from an influence profile
   $\I = (V,E,P_{\I})$ to a 
  real-valued vector  $(\psi^{\Group}_S(\I))_{S\in 2^V} \in \R^{2^n}$.
\end{mydef}

For any function $f: \R_{\infty}^n \rightarrow \R$, 
  both Definition \ref{def:DC} and Definition \ref{def:distfunc}
  have a natural extension: For  $S \subseteq V$, 
\begin{eqnarray*}
\mu^{\Group}[f]_S(G)  =  f(\vec{d}_G(S)),
\end{eqnarray*}
\begin{eqnarray*}
\psi^{\Group}[f]_S(\I)  =  \E_{(S_1, \ldots, S_{n-1})\sim P_{\I}(\{v\})}[f(\vec{d}(S, S_1, \ldots, S_{n-1}))].
\end{eqnarray*}

Axioms {\sf Anonymity} and {\sf Bayesian} extend naturally 
  as well as 
  the following characterization based on Theorem~\ref{thm:basis}.


\begin{restatable}[Characterization of Group Centrality]{mythm}{GroupCharacterization}
\label{thm:GroupCharacterization}
For any anonymous function $f: \R_\infty^n \rightarrow \R$, 
  $\psi^{\Group}[f]$ 
   is the unique influence-based group centrality
   that conforms with $\mu^{\Group}[f]$  and 
   satisfies both Axiom {\sf Anonymity}  and Axiom {\sf Bayesian}.
\end{restatable}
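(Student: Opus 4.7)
The plan is to mirror the characterization strategy used for individual centrality (Theorem~\ref{thm:IndividualCharacterization}), lifting every ingredient to the group setting. The argument factors through three observations: (a) $\psi^{\Group}[f]$ satisfies the group versions of the two principle axioms and conforms with $\mu^{\Group}[f]$; (b) a group analogue of Theorem~\ref{thm:unique_determination} holds, namely that any Bayesian group centrality is pinned down by its values on layered-graph instances; and (c) conformity fixes those values on layered-graph instances, forcing uniqueness.

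First I would check the three properties of $\psi^{\Group}[f]$ directly from Definition~\ref{def:groupcentrality} and the extended formula $\psi^{\Group}[f]_S(\I) = \E_{(S_1,\ldots,S_{n-1})\sim P_{\I}(S)}[f(\vec{d}(S,S_1,\ldots,S_{n-1}))]$. Anonymity follows because an isomorphism $\pi$ maps cascading sequences starting from $S$ bijectively to cascading sequences starting from $\pi(S)$ while preserving probabilities, and the distance vector permutes consistently, so $f$-invariance under permutation gives the required equality. Bayesian follows from linearity of expectation in $P_{\I}$. Conformity is immediate: for $\I^{(\rm BFS)}_G$ the only cascading sequence from $S$ is deterministic and its cascading distances equal the graph distances $\vec{d}_G(S)$, hence $\psi^{\Group}[f]_S(\I^{(\rm BFS)}_G) = f(\vec{d}_G(S)) = \mu^{\Group}[f]_S(G)$.

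Next I would state and prove the group version of Theorem~\ref{thm:unique_determination}: for each fixed $S \subseteq V$ the mapping $\I \mapsto \psi^{\Group}_S(\I)$ is a scalar-valued Bayesian functional on influence profiles, so the same argument works coordinate-by-coordinate. By Theorem~\ref{thm:basis}, every profile $\I$ admits a (unique) expansion $P_\I = \sum_{\I' \in \cL} \lambda_{\I'} P_{\I'}$. The Bayesian axiom a priori only speaks about convex combinations, so I would apply the standard rearrangement trick: separate positive and negative coefficients, augment with a multiple of the null instance $\I^N$ on both sides to handle the ``implicit'' $P_\I(S_0,S_0,\ldots,S_0)$ entries and to normalize, and then apply Bayesian to two genuine convex combinations that represent the same profile. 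This yields $\psi^{\Group}_S(\I) = \sum_{\I' \in \cL} \lambda_{\I'} \psi^{\Group}_S(\I') + (\text{contribution from } \I^N)$, so the values on layered-graph instances (and the null instance) determine $\psi^{\Group}_S$ everywhere. Taking this over all $S$ gives the group-level uniqueness lemma.

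Finally, since each layered-graph instance is $\I^{(\rm BFS)}_{L_V(R_0,\ldots,R_t)}$, the conformity condition forces $\psi^{\Group}_S(\I^{(\rm BFS)}_{L_V(R_0,\ldots,R_t)}) = \mu^{\Group}[f]_S(L_V(R_0,\ldots,R_t))$ for every $S$, and the null instance contributes $f(\vec{d}(S,S,\ldots,S))$ by the same conformity applied to the edgeless graph. Combined with the previous step, this completely determines any candidate centrality satisfying the hypotheses, so it must coincide with $\psi^{\Group}[f]$. The main obstacle I expect is the careful bookkeeping in Step~2: Theorem~\ref{thm:basis} is a genuinely linear (not convex) statement, and the Bayesian axiom only handles convex mixtures, so the reduction requires a clean treatment of both negative coefficients and of the redundant $P_\I(S_0,S_0,\ldots,S_0)$ entries that were removed in forming the basis; once that reduction is handled exactly as in the individual case, the rest is routine.
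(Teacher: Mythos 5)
Your proposal is correct and follows essentially the same route as the paper: verify Anonymity, Bayesian, and conformity for $\psi^{\Group}[f]$ by repeating the individual-centrality arguments with a subset $S$ in place of a node $v$, then obtain uniqueness from the layered-graph basis (Theorem~\ref{thm:basis}) via the unique-determination argument. The ``rearrangement trick'' you describe for handling negative coefficients and the null instance is exactly what the paper's Lemma~\ref{lem:linearmap} (Lemma 11 of \cite{CT17}) packages, and your observation that the unique-determination step applies coordinate-by-coordinate to the $2^n$-dimensional group measure is the same implicit step the paper takes.
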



	
Therefore, we can again reduce the analysis
   of an influence-based group centrality 
   to the analysis of the measure on the particular layered-graph
	instances. 

\subsection{Cooperative Games and Shapley Value}

A {\em cooperative game} \cite{Shapley53}
   is defined by tuple $(V, \tau)$, where $V$ is a set of $n$ players, and 
   $\tau:2^V \rightarrow \R$ is called 
   {\em characteristic function} specifying the
	cooperative utility of any subset of players.
In cooperative game theory, 
  a ranking function $\phi$ is a mapping from a characteristic function $\tau$
  to a vector $(\phi_v(\tau))_{v\in V} \in \R^n$, 
  indicating the importance of each individual in the cooperation.
One famous ranking function is 
  the Shapley value $\phi^{\Shapley}$~\cite{Shapley53}, as defined below.
Let $\Pi$ be the set of all permutations of $V$, and $\pi \sim \Pi$ denote a random permutation
	$\pi$ drawn uniformly from set $\Pi$.
For any $v\in V$ and $\pi \in \Pi$, 
let $S_{\pi, v}$ denote the set of nodes in $V$
	preceding $v$ in permutation $\pi$.
Then, $\forall v\in V$:
\vspace{-1mm}
\begin{align*}
\phi^{\Shapley}_v(\tau) = & \frac{1}{n!}\sum_{\pi \in \Pi} \left( \tau(S_{\pi,v} \cup \{v\}) - \tau(S_{\pi,v})\right) \\
= & \sum_{S\subseteq V \setminus \{v\}} \frac{|S|!(n-|S|-1)!}{n!}
\left( \tau(S \cup \{v\}) - \tau(S)\right) \\
= & \E_{\pi\sim \Pi}[\tau(S_{\pi,v} \cup \{v\}) - \tau(S_{\pi,v})].
\end{align*}
The Shapley value of a player $v$ measures
   the expected marginal contribution of $v$
  on the set of players ordered before $v$ in a random order.
\newcommand{\remove}[1]{}
Shapley 
 \cite{Shapley53} proved a \remove{the following}remarkable representation theorem:
  The Shapley value is the unique ranking function
	that satisfies all the following four conditions:
\remove{\begin{itemize}
\item }(1) {\sf Efficiency}:  
  $\sum_{v\in V} \phi_{v}(\tau) = \tau(V)$.
\remove{\item }(2) {\sf Symmetry}:
For any $u,v\in V$, 
  if  $\tau(S\cup \{u\}) = \tau(S\cup \{v\})$, $\forall 
  S\subseteq V\setminus \{u,v\}$,
 then $\phi_u(\tau) = \phi_v(\tau)$.
\remove{\item }(3) {\sf Linearity}: 
For any two characteristic functions $\tau$ and $\omega$,  for any
 $\alpha, \beta > 0$, 
$\phi(\alpha\tau + \beta\omega) = \alpha\phi(\tau) + \beta\phi(\omega)$.
\remove{\item }(4) {\sf Null Player}:
For any $v\in V$, 
	if $\tau(S\cup \{v\}) - \tau(S) = 0$,  $\forall 
   S\subseteq V \setminus \{v\}$, 
   then~$\phi_v(\tau)=0$.
\remove{\end{itemize}}{\sf Efficiency} states that the total utility is fully distributed.  
{\sf Symmetry} states that two players' ranking values should be the same
 if they have the identical marginal utility profile. 
{\sf Linearity} states that the ranking values of the weighted sum of 
  two coalition games is the same as the weighted
  sum of their ranking values.
{\sf Null Player}
  states that a player's ranking value should be zero if 
  the player has zero marginal utility to every subset.

\subsection{Shapley Centrality}


Shapley's celebrated concept --- 
  as highlighted in \cite{CT17} ---
  offers a formulation for assessing individuals' 
  performance in group influence settings.
It can be used to systematically compress 
  exponential-dimensional group centrality measures
  into $n$-dimensional individual centrality measures.

\begin{mydef}[Influence-based Shapley Centrality]
An {\em influence-based} Shapley centrality 
  $\psi^{\Shapley}$ is an individual centrality 
  measure corresponding to a group centrality $\psi^{\Group}$:
\begin{eqnarray*}
\psi^{\Shapley}_v(\I) & =  & \phi^{\Shapley}_v(\psi^{\Group}(\I))\\
 & = &  \E_{\pi\sim \Pi}[\psi^{\Group}_{S_{\pi,v} \cup \{v\}}(\I) - \psi^{\Group}_{S_{\pi,v}}(\I)].
\end{eqnarray*}
We also denote it as $\psi^{\Shapley} = \phi^{\Shapley} \circ \psi^{\Group}$.
\end{mydef}

In \cite{CT17}, Chen and Teng analyze the 
  Shapley value of the influence-spread function, which is 
  a special case of the following ``Shapley extension'' of 
  Definition \ref{def:distfunc}.
\begin{mydef}[Shapley Centrality of Stochastic Sphere-of-Influence] 
\label{def:distfuncShapley}
For each $f: \R_\infty^n \rightarrow \R$, 
  the {\em Shapley centrality of Stochastic Sphere-of-Influence} 
  $\psi^{\Shapley}[f]$ is defined as: 
\begin{align*}
&\psi^{\Shapley}[f]_v(\I) = \phi^{\Shapley}_v(\psi^{\Group}[f](\I)).
\end{align*}
\end{mydef}

Shapley centrality $\mu^{\Shapley}$ 
  can also be defined similarly based on graph-theoretical group centrality
   (see, for example, 
        \cite{ShapleyValueForCentrality1}).
We will refer to the extension of Definition \ref{def:DC} 
  as  $\mu^{\Shapley}[f]$.
Using Theorem~\ref{thm:basis}, we can establish the 
  following characterization.
\begin{restatable}[Characterization of Shapley Centrality]{mythm}{ShapleyCharacterization}
\label{thm:ShapleyCentrality}
For any anonymous function $f: \R_\infty^n \rightarrow \R$, 
  $\psi^{\Shapley}[f]$  is the unique influence-based centrality
   that conforms with $\mu^{\Shapley}[f]$
  and satisfies both Axiom {\sf Anonymity}  and Axiom {\sf Bayesian}.
\end{restatable}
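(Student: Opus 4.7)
The plan is to derive this theorem by composing the group-centrality characterization (Theorem~\ref{thm:GroupCharacterization}) with the algebraic properties of the Shapley value, and then invoking the uniqueness-via-layered-graphs principle established by Theorem~\ref{thm:unique_determination}. The argument splits into two parts: verifying that $\psi^{\Shapley}[f]$ satisfies the three required conditions (Axiom~{\sf Anonymity}, Axiom~{\sf Bayesian}, and graph-theoretical conformity with $\mu^{\Shapley}[f]$), and then using those conditions to pin the centrality down uniquely.

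For the existence direction, I would verify the three properties in turn. For Axiom~{\sf Anonymity}, the group extension of Proposition~\ref{prop:distfuncanonymous} gives anonymity of $\psi^{\Group}[f]$ at the group level, and the Shapley value $\phi^{\Shapley}$ commutes with relabeling of players --- a direct consequence of its symmetric definition as an average over permutations. Composing the two yields Axiom~{\sf Anonymity} for $\psi^{\Shapley}[f]$. For Axiom~{\sf Bayesian}, the crucial ingredient is the {\sf Linearity} clause in Shapley's characterization. The group analogue of Proposition~\ref{prop:distfuncbayesian} ensures $\psi^{\Group}[f](\I) = \alpha\,\psi^{\Group}[f](\I_1) + (1-\alpha)\,\psi^{\Group}[f](\I_2)$ coordinate-wise whenever $P_{\I} = \alpha P_{\I_1} + (1-\alpha)P_{\I_2}$; applying $\phi^{\Shapley}$ to both sides and invoking linearity transfers the identity to $\psi^{\Shapley}[f]$. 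For graph-theoretical conformity, I use that on any BFS instance $\I^{({\rm BFS})}_G$ one has $\psi^{\Group}[f](\I^{({\rm BFS})}_G) = \mu^{\Group}[f](G)$ (contained in Theorem~\ref{thm:GroupCharacterization}); applying the Shapley projection to both sides yields $\psi^{\Shapley}[f](\I^{({\rm BFS})}_G) = \mu^{\Shapley}[f](G)$.

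The uniqueness direction is then immediate. By Theorem~\ref{thm:unique_determination}, any Bayesian influence-based (individual) centrality is uniquely determined by its values on layered-graph instances. Since every layered-graph instance is the BFS instance of a layered graph, conformity with $\mu^{\Shapley}[f]$ fixes all those values. Hence any centrality satisfying Axioms~{\sf Anonymity}, {\sf Bayesian}, and conformity with $\mu^{\Shapley}[f]$ must agree with $\psi^{\Shapley}[f]$ on every instance in $\cL$ (together with the null instance), and therefore, by Theorem~\ref{thm:unique_determination}, on all influence profiles.

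The main obstacle, modest as it is, is the bookkeeping between the $2^n$-dimensional group centrality and its $n$-dimensional Shapley projection: one must confirm that Axioms~{\sf Anonymity} and~{\sf Bayesian} for the resulting individual centrality are genuine consequences of the corresponding group-level axioms combined with the symmetry and linearity of $\phi^{\Shapley}$, and that the ranking function $\phi^{\Shapley}$ composes cleanly with $P_{\I}$ in the Bayesian identity. Once these compositions are verified --- essentially by unfolding the definition of $\phi^{\Shapley}$ as an expectation over random permutations and using linearity of expectation --- the same layered-graph basis argument that powered Theorems~\ref{thm:IndividualCharacterization} and~\ref{thm:GroupCharacterization} closes the proof without further work.
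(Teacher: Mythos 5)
Your proposal is correct and follows essentially the same route as the paper: establish Axiom {\sf Bayesian} via the linearity of $\phi^{\Shapley}$ applied to the group-level Bayesian identity, establish {\sf Anonymity} and conformity by pushing the corresponding group-level properties (Theorem~\ref{thm:GroupCharacterization}) through the Shapley projection, and conclude uniqueness from Theorem~\ref{thm:unique_determination} since conformity with $\mu^{\Shapley}[f]$ fixes the values on all layered-graph instances. No substantive differences from the paper's argument.
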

Theorem \ref{thm:ShapleyCentrality} systematically extends
  the work of \cite{CT17}
  to all sphere-of-influence formulations.
The SNI and Shapley centrality analyzed in \cite{CT17}
  are  $\psi^{\Individual}[f^{\Reachability}]$ and 
   $\psi^{\Shapley}[f^{\Reachability}]$, respectively.
In our process of generalizing the work of \cite{CT17},
  we also {\em resolve an open question} central to
  the axiomatic characterization of \cite{CT17}, 
  which is based on a family of {\em critical set instances}
  that do not correspond to a
  graph-theoretical interpretation.
In fact, the influence-spread functions of these
   ``axiomatic'' critical set instances used in \cite{CT17}
   are not submodular.
In contrast, influence-spread functions of the
  popular  independent cascade (IC) and 
  linear threshold (LT) models, as well as, 
  the trigger models of Kempe-Kleinberg-Tardos, 
  are submodular.
The submodularity of these influence-spread functions
  plays an instrumental role in 
  influence maximization algorithms \cite{kempe03,ChenWeiBook}.
Thus, it is a fundamental and mathematical question whether
  influence profiles can be characterized by 
  ``simpler'' influence instances. 
 Our layered-graph characterization (Theorem~\ref{thm:basis}) resolves this
 	open question by connecting all influence profiles with
 	simple BFS cascading sequence in the layered graphs, which
 	is a special case of the IC model and possess the submodularity property.
In summary, our layered-graph characterization is instrumental to
	the series of characterizations we could provide in this paper for
	influence-based individual, group, and Shapley centralities
	(Theorem~\ref{thm:IndividualCharacterization}, \ref{thm:GroupCharacterization},
	 \ref{thm:ShapleyCentrality}).

\section{Efficient Algorithm for Approximating Influence-based Centrality}

Besides studying the characterization of influence-based centralities, we also want to
	compute these centrality measures efficiently.
Accurate computation is in general infeasible (e.g.\ it is \#P-hard to compute
	influence-based reachability centrality $\psi^{\Individual}[f^{\Reachability}]$ in the triggering model~\cite{wang2012scalable,ChenWW10b}).
Thus, we are looking into approximating centrality values.
Instead of designing one algorithm for each centrality, we borrow the algorithmic
	framework from \cite{CT17} and show how to adapt the framework to approximate
	different centralities.
Same as in \cite{CT17}, the algorithmic framework applies to the triggering model
	of influence propagation.
For efficient computation, we further assume that
	the distance function $f$ is {\em additive}, i.e. 
	 $f(\vec{d}) = \sum_{u\in V} g(d_u)$ for some scalar function $g:\R_{\infty} \rightarrow \R$
	 satisfying $g(\infty) = 0$.
The degree, harmonic, and reachability centralities all satisfy this condition.
In particular, we have $f^{\Degree}(\vec{d}) = \sum_{u\in V} g^{\Degree}(d_u)$,
	with $g^{\Degree}(d_u) = 1$ if $d_u=1$ and $g^{\Degree}(d_u) = 0$ otherwise;
	$f^{\Harmonic}(\vec{d}) = \sum_{u\in V} g^{\Harmonic}(d_u)$, with
	$g^{\Harmonic}(d_u) = 1/d_u$ if $d_u>0$ and $g^{\Harmonic}(d_u) = 0$ otherwise; and
	$f^{\Reachability}(\vec{d}) = \sum_{u\in V} g^{\Reachability}(d_u)$, where
	$g^{\Reachability}(d_u) = 1$ if $d_u < \infty$ and $g^{\Reachability}(d_u) = 0$ otherwise.

\begin{algorithm}[h!]
	\centering
	\caption{\ICERR: Efficient estimation of sphere-of-influence centralities via RR-sets, for the triggering model and additive distance function
	$f(\vec{d}) = \sum_{u\in V} g(d_u)$.} \label{alg:rrcentrality}
	\begin{algorithmic}[1]
		\INPUT{Network: $G=(V,E)$; Parameters:  random triggering set distribution $\{\bT(v)\}_{v\in V}$, 
			$\varepsilon > 0$, $\ell>0$, $k\in [n]$, node-wise distance function $g$
		}
		\OUTPUT{$\epsi_v$,  $\forall v\in V$: estimated centrality value
		}
		\STATE \{Phase 1. Estimate the number of RR sets needed \} \label{line:phase1b}
		\STATE $\LB = 1$; $\varepsilon' = \sqrt{2} \cdot \varepsilon$; $\theta_0 = 0$
		\STATE $\est_v = 0$ for every $v\in V$
		\FOR {$i=1$ to $ \lfloor \log_2 n \rfloor - 1$} \label{line:phase1forb}
		\STATE $x = n / 2^i$
		\STATE $\theta_i = \left\lceil \frac{ n \cdot 
			((\ell + 1)\ln n + \ln \log_2 n + \ln 2) \cdot (2+\frac{2}{3}\varepsilon')}
		{\varepsilon'^2 \cdot x} \right \rceil $ \label{line:setthetai}
		\FOR {$j = 1$ to $\theta_i - \theta_{i-1}$}
		\STATE generate a random RR set $R_v$ rooted at $v$, and for each $u\in R_v$, record the
			distance $d_{R_v}(u,v)$ from $u$ to $v$ in this reverse simulation.
			\label{line:generateRR1}
		\IF {estimating individual centrality}
			\STATE for every $u\in R_v$, $\est_{u} = \est_{u} + g(d_{R_v}(u,v))$ \label{line:estnew1}
		\ELSE 
			\STATE \{estimating Shapley centrality\}
			\STATE for every $u\in R_v$, $\est_{u} = \est_{u} + 
				\phi^{\Shapley}_u(g(d_{R_v}(\cdot,v)))$
			\label{line:estnewShapley1}
		\ENDIF \label{line:endestimate1}
		\ENDFOR
		\STATE $\est^{(k)} = $ the $k$-th largest value in $\{\est_v\}_{v\in V}$ \label{line:kmaxest}
		\IF{ $ n\cdot \est^{(k)} / \theta_i \ge (1+\varepsilon') \cdot x$} \label{line:cond1}
		\STATE $ \LB = n\cdot \est^{(k)} / (\theta_i \cdot (1+\varepsilon'))$ \label{line:setLB}
		\STATE {\bf break}
		\ENDIF
		\ENDFOR \label{line:phase1fore}
		\STATE $\theta = \left\lceil \frac{n ((\ell+1)\ln n + \ln 4)(2+ \frac{2}{3} \varepsilon) }{\varepsilon^2 \cdot \LB} \right\rceil$ \label{line:thetanew}
		\label{line:phase1e}
		\STATE \{Phase 2. Estimate the centrality value\} \label{line:phase2b}
		\STATE $\est_v = 0$ for every $v\in V$ \label{line:resetest}
		\FOR {$j = 1$ to $\theta$ }
		\STATE generate a random RR set $R_v$ rooted at $v$, and for each $u\in R_v$, record the
			distance $d_{R_v}(u,v)$ from $u$ to $v$ in this reverse simulation.
			\label{line:generateRR2}
		\IF {estimating individual centrality}
			\STATE for every $u\in R_v$, $\est_{u} = \est_{u} + g(d_{R_v}(u,v))$ \label{line:estnew2}
		\ELSE 
		\STATE for every $u\in R_v$, $\est_{u} = \est_{u} + 
				\phi^{\Shapley}_u(g(d_{R_v}(\cdot,v)))$
				\label{line:estnewShapley2}
		\ENDIF \label{line:endestimate2}
		\ENDFOR
		\STATE for every $v\in V$, $\epsi_v = n \cdot \est_v / \theta$ 
		\label{line:adjustnew2}
		\STATE return $\epsi_v$, $v\in V$ \label{line:phase2e}
	\end{algorithmic}
\end{algorithm}

The algorithmic framework for estimating individual and Shapley forms of sphere-of-influence 
centrality is given in Algorithm~\ref{alg:rrcentrality}, and is denoted
	{\ICERR} (for Influence-based Centrality Estimate via RR set).
The algorithm uses the approach of reverse-reachable sets (RR sets) \cite{BorgsBrautbarChayesLucier,tang14,tang15}.
An RR set $R_v$ is generated by randomly selecting a node $v$ (called the {\em root} of $R_v$) with
	equal probability, and then reverse simulating the influence propagation starting from $v$.
In the triggering model, it is simply sampling a random triggering set $T(v)$ for $v$, 
	putting all nodes in $T(v)$ into $R_v$, and then recursively sampling triggering sets for all
	nodes in $T(v)$, until no new nodes are generated.

The algorithm has two phases.
In the first phase (lines~\ref{line:phase1b}--\ref{line:phase1e}), the number $\theta$ 
	of RR sets needed for the estimation is computed.
The mechanism for obtaining $\theta$ follows the IMM algorithm in~\cite{tang15} and is
	also the same as in~\cite{CT17}.
In the second phase (lines~\ref{line:phase2b}--\ref{line:phase2e}), $\theta$ RR
	sets are generated, and for each RR set $R$, the centrality estimate of $u\in R$,
	$\est_u$, is updated properly depending on the centrality type.

Comparing to the algorithm in \cite{CT17}, our change is in
	lines~\ref{line:generateRR1}--\ref{line:endestimate1} and
	lines~\ref{line:generateRR2}--\ref{line:endestimate2}.
First, when generating an RR set $R_v$, we not only stores the nodes, but for each $u\in R_v$,
	we also store the distance from $u$ to root $v$ in the reverse simulation paths
	$d_{R_v}(u,v)$.
Technically, $d_{R_v}(u,v)$ is the graph distance from $u$ to $v$ in the subgraph $G_{R_v}$, 
	where $G_{R_v} = (V, E_{R_v})$ with $E_{R_v} = \{(w,u) \mid u\in R_v, w \in T(u) \}$ is the subgraph
	generated by the triggering sets sampled during the reverse simulation.
Note that with this definition, for $u\not\in R_v$, we have $d_{R_v}(u,v) = \infty$.
Next, if we are estimating individual centrality, we simply update the estimate $\est_u$
	by adding $g(d_{R_v}(u,v))$.
If we are estimating Shapley centrality, we need to update $\est_u$ by adding
	$\phi^{\Shapley}_u(g(d_{R_v}(\cdot,v)))$, the Shapley value of $u$ on the
	set function $g(d_{R_v}(\cdot,v)): S\in 2^V \mapsto g(d_{R_v}(S,v)) \in \R$.
We will show below that 
	the computation of $\phi^{\Shapley}_u(g(d_{R_v}(\cdot,v)))$ for all $u\in R_v$ 
	together is linear to $|R_v|$, so it is in the same order of generating $R_v$ and
	does not incur significant extra cost.
Note that the algorithm in~\cite{CT17} corresponds to our algorithm with $g=g^{\Reachability}$.
The correctness of the algorithm replies on the following crucial lemma.

\begin{restatable}{mylem}{RRsetCrucial} \label{lem:RRsetCrucial}
Let $R_v$ be a random RR set with root $v$ generated in a triggering model instance $\I$.
Then, $\forall u\in V$, $u$'s stochastic sphere-of-influence individual centrality with function
	$f(\vec{d}) = \sum_{u\in V} g(d_u)$ is
	$\psi[f]_u(\I) = n\cdot \E[g(d_{R_v}(u,v))]$, where the expectation is taking over 
	the distribution of RR set $R_v$.
Similarly, $u$'s influence-based Shapley centrality with $f$
	is $\psi^{\Shapley}[f]_u(\I) = n\cdot \E[\phi^{\Shapley}_u(g(d_{R_v}(\cdot,v)))]$.
\end{restatable}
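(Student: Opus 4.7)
My plan is to prove both identities by coupling the triggering-model dynamics with its live-edge-graph representation and then combining additivity of $f$ with the fact that the RR root is chosen uniformly at random from $V$. For each $w \in V$, sample $T(w) \sim D(w)$ independently, and let $L$ be the random directed graph on $V$ whose edges are $\{(w', w) : w' \in T(w)\}$. Two observations drive the proof. First, the forward cascading sequence from any seed set $S_0$ in the triggering model is a deterministic function of $L$: the cascading distance from $S_0$ to $v$ equals the graph distance $d_L(S_0, v)$, with value $\infty$ when $v$ is unreachable from $S_0$ in $L$. Second, the reverse BFS that generates $R_v$ is just the lazy-sampling counterpart of $L$ restricted to the in-component of $v$, so under this coupling $R_v = \{u \in V : d_L(u,v) < \infty\}$ and the recorded reverse distance satisfies $d_{R_v}(u,v) = d_L(u,v)$ for every $u \in V$.

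Granted these observations, the individual case follows by additivity of $f$ and linearity of expectation:
\begin{align*}
\psi[f]_u(\I) &= \E_L\!\left[\sum_{v \in V} g(d_L(u,v))\right] = \sum_{v \in V} \E_L[g(d_L(u,v))].
\end{align*}
Conditioning on the uniformly random root $v$ (which contributes a factor $1/n$) gives
\begin{align*}
n \cdot \E[g(d_{R_v}(u,v))] &= \sum_{v \in V} \E_L[g(d_L(u,v))],
\end{align*}
matching the first display.

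For the Shapley identity, the key tool is the {\sf Linearity} property of the Shapley operator $\phi^{\Shapley}$ in its characteristic function. Under the coupling, the group centrality decomposes as $\psi^{\Group}[f]_S(\I) = \E_L\!\left[\sum_{v \in V} g(d_L(S,v))\right]$, so commuting $\phi^{\Shapley}_u$ with the sum over $v$ and with the expectation over $L$ yields
\begin{align*}
\psi^{\Shapley}[f]_u(\I) &= \sum_{v \in V} \E_L\!\left[\phi^{\Shapley}_u\bigl(g(d_L(\cdot,v))\bigr)\right].
\end{align*}
Conditioning on the uniform root again shows $n \cdot \E\bigl[\phi^{\Shapley}_u(g(d_{R_v}(\cdot,v)))\bigr]$ equals the same sum, establishing the second identity. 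The main obstacle is the rigorous justification of the second observation above: previous RR-set analyses only need the marginal event $u \in R_v$, whereas here we must match the entire reverse-distance profile $\{d_{R_v}(u,v)\}_{u \in V}$ with the live-edge shortest-path profile $\{d_L(u,v)\}_{u \in V}$ in distribution. I would resolve this by induction on the reverse-BFS step, showing that $u$ first appears at reverse step $t$ exactly when the shortest $u$-to-$v$ path in $L$ has length $t$, using the independence of triggering-set samples to couple the lazy reverse sampling with the full graph $L$.
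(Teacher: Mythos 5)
Your proposal is correct and follows essentially the same route as the paper: the paper likewise couples the forward cascade and the reverse RR-set simulation through a shared live-edge graph $L$ (its Lemma on distance equivalence establishes exactly your second observation, that $d_{R_w}(u,w)=d_L(u,w)$ equals the forward cascading distance from $u$ to $w$), then conditions on the uniformly random root, uses additivity of $f$ for the individual case, and invokes linearity of the Shapley value to commute $\phi^{\Shapley}_u$ with the sum and expectation for the Shapley case. The only cosmetic difference is that the paper justifies the distance matching by noting directly that the reverse-simulated subgraph is the induced subgraph of $L$ on the in-component of the root, rather than by an induction on reverse-BFS steps.
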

From Lemma~\ref{lem:RRsetCrucial}, we can understand that
	lines~\ref{line:estnew2} and~\ref{line:estnewShapley2} are simply accumulating 
	empirical values of $g(d_{R_v}(u,v))$ and $\phi^{\Shapley}_u(g(d_{R_v}(\cdot,v)))$
	for individual centrality and Shapley centrality, respectively, and 
	line~\ref{line:adjustnew2} averages this cumulative value and then multiply it by $n$
	to obtain the final centrality estimate.
With the above lemma, the correctness of the algorithm {\ICERR} is shown by the following
	theorem.

\begin{restatable}{mythm}{thmAlgo} 
	\label{thm:cenest}
	Let $(\psi_v)_{v\in V}$ be the true centrality value for an
		influence-based individual or Shapley centrality with additive function $f$,
		and let $\psi^{(k)}$ be the $k$-th largest value in $(\psi_v)_{v\in V}$.
	For any $\epsilon > 0$, $\ell > 0$, and $k\in [n]$, 
	Algorithm {\ICERR} 
	returns the estimated centrality $(\epsi_v)_{v\in V}$ that satisfies
	(a) unbiasedness: $\E[\epsi_v] = \psi_v, \forall v\in V$; and
	(b) robustness: under the condition that $\psi^{(k)}\ge 1$, with probability at least $1-\frac{1}{n^\ell}$:
	\vspace{-1mm}
	\begin{equation} \label{eq:relativeerror}
	\left\{ 
	\begin{array}{lr}
	|\epsi_v - \psi_v| \le \varepsilon \psi_v & \forall v\in V \mbox{ with } \psi_v > \psi^{(k)},\\
	|\epsi_v - \psi_v| \le \varepsilon \psi^{(k)} & \forall v\in V \mbox{ with } \psi_v \le \psi^{(k)}.
	\end{array}
	\right.
	\end{equation}
\end{restatable}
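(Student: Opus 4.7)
The plan is to split the theorem into its two claims and handle them using Lemma~\ref{lem:RRsetCrucial} together with a standard adaptive-sampling concentration argument in the spirit of the IMM framework~\cite{tang15} used in~\cite{CT17}. For unbiasedness, observe that in Phase~2 the estimates are reinitialized (line~\ref{line:resetest}) and then updated using $\theta$ fresh, i.i.d.\ random RR sets that are generated independently of everything computed in Phase~1. Lemma~\ref{lem:RRsetCrucial} says that for a single random RR set $R_v$ with uniformly random root $v$, $\E[g(d_{R_v}(u,v))] = \psi^{\Individual}[f]_u(\I)/n$ and $\E[\phi^{\Shapley}_u(g(d_{R_v}(\cdot,v)))] = \psi^{\Shapley}[f]_u(\I)/n$. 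Summing and multiplying by $n/\theta$ as in line~\ref{line:adjustnew2} yields $\E[\epsi_v]=\psi_v$ conditionally on $\theta$, and hence unconditionally.

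For robustness, I would first bound the range of each per-sample contribution. For the target additive functions we have $g(d)\in[0,1]$, so $g(d_{R_v}(u,v))\in[0,1]$; and since a Shapley value of a $[0,1]$-bounded monotone set function is itself in $[0,1]$ (each marginal contribution lies in $[0,1]$, and $\phi^{\Shapley}_u$ is a convex combination of such marginals), the Shapley update is also in $[0,1]$. Thus each $\est_v$ in Phase~2 is a sum of $\theta$ i.i.d.\ random variables in $[0,1]$ with mean $\psi_v/n$, and a standard Chernoff/Hoeffding bound of the same form used in~\cite{tang15,CT17} yields, for the choice of $\theta$ in line~\ref{line:thetanew},
\[
\Pr\bigl[\,|\epsi_v - \psi_v| > \varepsilon\max(\psi_v,\psi^{(k)})\,\bigr] \le \tfrac{1}{2n^{\ell+1}},
\]
provided $\LB \le \psi^{(k)}$. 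A union bound over $v\in V$ then gives the desired per-coordinate guarantee on the Phase~2 failure event with probability at most $1/(2n^\ell)$.

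It remains to show the Phase~1 condition $\LB\le\psi^{(k)}$ holds with probability at least $1-1/(2n^\ell)$. This is the geometric doubling argument from IMM: the candidate thresholds $x=n/2^i$ halve over $\lfloor\log_2 n\rfloor-1$ iterations, and the sample counts $\theta_i$ in line~\ref{line:setthetai} are chosen so that, by a Chernoff bound, conditional on $\psi^{(k)} < x$ the event $n\cdot\est^{(k)}/\theta_i \ge (1+\varepsilon')x$ occurs with probability at most $1/(2n^\ell\log_2 n)$. A union bound over the at most $\log_2 n$ iterations gives the required Phase~1 bound, so that the setting of $\LB$ in line~\ref{line:setLB} satisfies $\LB\le\psi^{(k)}$, which makes $\theta$ in line~\ref{line:thetanew} at least as large as required for the Phase~2 bound above. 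A final union bound of the two failure events yields the overall $1-1/n^\ell$ guarantee.

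The main technical obstacle is handling the adaptivity in Phase~1, since the sample size $\theta$ depends on the (random) outcome of Phase~1, so without independence one could not directly apply concentration in Phase~2. The clean fix is that Phase~2 uses entirely fresh RR sets, so conditional on the value of $\theta$ the Phase~2 samples are still i.i.d.; the remaining care is purely in verifying that the Phase~1 stopping rule correctly lower-bounds $\psi^{(k)}$ with the stated probability, which reuses the IMM analysis essentially verbatim---the only modification from~\cite{CT17} is the substitution of $g(d_{R_v}(u,v))$ (or its Shapley counterpart) for the indicator used there, and the bound $g\le 1$ ensures the Chernoff constants carry over without change.
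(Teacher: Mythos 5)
Your proposal is correct and follows essentially the same route as the paper, whose proof simply states that the argument of \cite{CT17} carries over verbatim once Lemma~\ref{lem:RRsetCrucial} is substituted for the corresponding RR-set lemma there. You in fact supply more detail than the paper does --- in particular the observation that the per-sample contributions (and their Shapley counterparts) must be bounded in $[0,1]$ for the Chernoff constants to carry over, a point the paper leaves implicit.
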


In terms of time complexity, for individual centrality, lines~\ref{line:estnew1}
	and~\ref{line:estnew2} take constant time for each $u\in R_v$, so it has the
	same complexity as the algorithm in \cite{CT17}.
For Shapley centrality, the following lemma shows that the computation of
	$\phi^{\Shapley}_u(g(d_{R_v}(\cdot,v)))$ for all $v$ is $O(|R_v|)$,
	same as the complexity of generating $R_v$, so it will not add complexity
	to the overall running time.
Suppose $R_v$ has $\Delta$ levels in total (i.e., $\Delta = \max \{d_{R_v}(u', v) \mid u' \in R_v\}$), and let $s_i = |\{u' \mid u' \in R_v,\, d_{R_v}(u', v) \ge i\}|$.
	
\begin{restatable}{mylem}{ShapleyCompute}
%
For any function $g:\R_\infty \rightarrow \R$ with $g(\infty) = 0$,
\begin{align*}
    \phi^{\Shapley}_u(g(d_{R_v}(\cdot,v))) = \frac{1}{|R_v|} g(k) + \frac{1}{|R_v|!} \sum_{k < i \le \Delta} (g(k) - g(i)) \left(\sum_{0 \le j \le s_i} \frac{s_i!}{j!} - \sum_{0 \le j \le s_{i + 1}} \frac{s_{i + 1}!}{j!}\right).
\end{align*}
In $O(|R_v|)$ time we can compute this value for all nodes in $R_v$ (assuming infinite precision).
For degree centrality, $\phi^{\Shapley}_u(g^{\Degree}(d_{R_v}(\cdot,v))) = 
1/|\{w \mid d_{R_v}(w,v) = 1\} |$ if $d_{R_v}(u,v) = 1$, and otherwise it is $0$.
For reachability centrality, $\phi^{\Shapley}_u(g^{\Reachability}(d_{R_v}(\cdot,v))) = 
1/|R_v|$.
\end{restatable}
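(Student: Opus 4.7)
The plan is to derive the Shapley value by a direct permutation-symmetry computation of marginal contributions, then exploit distance-class symmetry for the linear-time implementation. Set $\tau(S) = g(d_{R_v}(S, v))$, where $d_{R_v}(S, v) = \min_{w \in S} d_{R_v}(w, v)$ with $d_{R_v}(w, v) = \infty$ for $w \notin R_v$ and $g(\infty) = 0$. Fix $u \in R_v$ with $k = d_{R_v}(u, v)$, and for a uniform random permutation $\pi$ of $V$, let $D = d_{R_v}(S_{\pi, u}, v)$. Since $\tau(S_{\pi, u} \cup \{u\}) - \tau(S_{\pi, u}) = g(\min(k, D)) - g(D)$ vanishes when $D \le k$ and equals $g(k) - g(D)$ otherwise, we obtain $\phi^{\Shapley}_u = P(D = \infty)\, g(k) + \sum_{i = k+1}^\Delta P(D = i)\,(g(k) - g(i))$; the analogous analysis with $k = \infty$ gives $\phi^{\Shapley}_u = 0$ for $u \notin R_v$, which justifies restricting the algorithm's update to $u \in R_v$.

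The core step is computing $P(D \ge i)$ for $i > k$. The event $\{D \ge i\}$ holds iff no element of $R_v \setminus \{u\}$ at distance strictly less than $i$ precedes $u$ in $\pi$; there are exactly $|R_v| - s_i - 1$ such elements, and by the standard uniformity of their relative order with $u$ in a random permutation of $V$, the probability that $u$ comes first among these $|R_v| - s_i$ elements is $1/(|R_v| - s_i)$. In particular $P(D = \infty) = 1/|R_v|$ (corresponding to $s_{\Delta+1} = 0$). Setting $P(D = i) = P(D \ge i) - P(D \ge i+1)$, telescoping, and substituting into the formula for $\phi^{\Shapley}_u$ yields the claimed closed form, after an algebraic rearrangement that expresses the probability differences in the factorial-sum presentation of the lemma. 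The two corollaries then drop out: for reachability every $g(k) - g(i)$ vanishes on finite distances so only $g(k)/|R_v| = 1/|R_v|$ survives; for degree with $k = 1$ the telescoping collapses to $1/(|R_v| - s_2) = 1/|\{w : d_{R_v}(w, v) = 1\}|$, and is $0$ when $k \ne 1$ because $g(k) = 0$ and the remaining terms cancel.

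For the $O(|R_v|)$ running time, two observations suffice. All nodes at the same distance share the same Shapley value by symmetry, so only $\Delta + 1 \le |R_v|$ distinct values need be computed. The counts $s_i$ are tabulated in a single bucket pass through $R_v$, and the distinct Shapley values are then assembled by scanning $k$ from $\Delta$ down to $1$ while maintaining two running sums $A_k = \sum_{i > k} \beta_i$ and $B_k = \sum_{i > k} g(i)\,\beta_i$, each updated in $O(1)$ time per step by a single additional term; the coefficient sequences themselves are maintained incrementally via the simple recurrence $T(s+1) = (s+1) T(s) + 1$, $T(0) = 1$, under the stated infinite-precision model. The main technical obstacle I anticipate is the algebraic reconciliation between the probability-difference form produced by the permutation analysis and the factorial-sum presentation in the lemma statement; once that identity is in hand, the rest is routine Shapley-value bookkeeping and a standard telescoping-sum implementation.
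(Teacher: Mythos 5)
Your probabilistic derivation is correct, and it is in fact cleaner than the route taken in the appendix, which counts, for each position $j$ of $u$ in a uniform permutation of $R_v$, the arrangements of predecessors drawn from the $s_i$ nodes at distance $\ge i$. Your key identity $P(D\ge i) = 1/(|R_v|-s_i)$ (the chance that $u$ precedes all $|R_v|-s_i-1$ other nodes at distance $<i$) is right and yields
\[
\phi^{\Shapley}_u(g(d_{R_v}(\cdot,v))) \;=\; \frac{g(k)}{|R_v|} \;+\; \sum_{k<i\le\Delta}\bigl(g(k)-g(i)\bigr)\left(\frac{1}{|R_v|-s_i}-\frac{1}{|R_v|-s_{i+1}}\right).
\]
The trouble is the step you defer as the ``algebraic reconciliation'': it cannot be carried out, because this expression is \emph{not} equal to the factorial-sum form displayed in the lemma. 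Take $|R_v|=4$ with one node at each of the distances $0,1,2,3$, so $s_1=3$, $s_2=2$, $s_3=1$. For the root ($k=0$), your coefficient of $g(0)-g(1)$ is $\frac{1}{4-3}-\frac{1}{4-2}=\frac12$, whereas the lemma's is $\frac{1}{4!}\bigl(\sum_{0\le j\le 3}\frac{3!}{j!}-\sum_{0\le j\le 2}\frac{2!}{j!}\bigr)=\frac{16-5}{24}=\frac{11}{24}$; direct enumeration of the $24$ permutations confirms $\frac12$. The discrepancy originates in the appendix proof itself: when counting permutations with $u$ at position $j$ and all $j-1$ predecessors among the $s_i$ far nodes, it writes $\binom{s_i}{j-1}(j-1)!$ and omits the factor $(|R_v|-j)!$ for arranging the suffix; restoring that factor turns the count into $|R_v|!/(|R_v|-s_i)$ and recovers exactly your probability. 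So your route proves a correct closed form, and the identity you were hoping to reconcile with is the part that fails.

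Two smaller points. In the degree case your telescoping to $1/(|R_v|-s_2)$ is right, but $|R_v|-s_2$ counts the nodes at distance $\le 1$, which includes the root $v$ itself at distance $0$, so it equals $1+|\{w\mid d_{R_v}(w,v)=1\}|$ rather than $|\{w\mid d_{R_v}(w,v)=1\}|$ as you (and the statement) assert; moreover the root then carries Shapley value $-D_1/(D_1+1)$, where $D_1$ is the number of distance-$1$ nodes, as it must for the values to sum to $\tau(R_v)=g^{\Degree}(0)=0$, so ``otherwise it is $0$'' also fails at the root. On the other hand, your justification for restricting attention to $u\in R_v$ (nodes at distance $\infty$ are null players since $g(\infty)=0$), your reachability case, and your $O(|R_v|)$ suffix-sum implementation are all sound. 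The clean fix is to state and prove the probability-difference closed form you derived in place of the factorial-sum one.
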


Therefore, the time complexity follows \cite{CT17}:
\begin{restatable}{mythm}{algotime} \label{thm:algotime}
Under the assumption that sampling a triggering set $T(v)$ takes time
	at most $O(|N^-(v)|)$ time,  and the condition
	$\ell \ge (\log_2 k - \log_2 \log_2 n)/\log_2 n$,
	the expected running time 
	of {\ICERR} is $O(\ell (m+n) \log n \cdot \E[\sigma(\tilde{\bv})]/ (\psi^{(k)} \varepsilon^2))$,
	where $\E[\sigma(\tilde{\bv})]$ is the expected influence spread of a random 
	node $\tilde{\bv}$ drawn from $V$ with probability proportional to the 
	in-degree of $\tilde{\bv}$.
\end{restatable}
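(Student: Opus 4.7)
The plan is to factor the expected running time as (expected number $\theta$ of RR sets generated across both phases) times (expected per-RR-set cost), and then show each factor matches the claimed rate separately. Both factors are essentially inherited from the IMM analysis of \cite{tang15} and its adaptation in \cite{CT17}, with two modifications to account for: the distance annotation $d_{R_v}(u,v)$ on RR sets, and the Shapley-mode update on lines~\ref{line:estnewShapley1}/\ref{line:estnewShapley2}.

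I would first bound the per-RR-set cost. Generating an RR set $R_v$ with a uniformly random root $v$ takes time $O\!\left(\sum_{u \in R_v}(|N^-(u)|+1)\right)$ by the hypothesis that sampling one triggering set $T(u)$ costs $O(|N^-(u)|)$; recording the reverse-BFS distance $d_{R_v}(u,v)$ for every visited node $u$ adds only $O(1)$ per node, and the centrality updates cost at most $O(|R_v|)$ total --- trivially for individual centrality (line~\ref{line:estnew2}), and by the preceding Shapley-computation lemma for the Shapley case (line~\ref{line:estnewShapley2}). Taking expectations over the random root and simulation, and invoking the standard reverse-reachable-set identity of \cite{BorgsBrautbarChayesLucier,tang15} --- that the expected number of in-edges inspected during a uniformly rooted RR-set generation equals $\frac{m}{n}\E[\sigma(\tilde{\bv})]$ with $\tilde{\bv}$ drawn with probability proportional to in-degree --- the expected per-RR-set cost comes out to $O\!\left(\frac{m+n}{n}\E[\sigma(\tilde{\bv})]\right)$.

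Next I would bound $\theta$. The sample-size schedule on lines~\ref{line:setthetai} and \ref{line:thetanew} is identical to that of IMM and of \cite{CT17}. By Lemma~\ref{lem:RRsetCrucial}, for every $v$ the accumulator $\est_v$ in our algorithm is an unbiased estimator of $\theta\,\psi_v/n$ formed as a sum of i.i.d.\ bounded increments, so the same Chernoff/martingale concentration argument used by IMM/CT17 applies verbatim to the top-$k$ statistic $\est^{(k)}$ on line~\ref{line:kmaxest}. Under the hypothesis $\ell \ge (\log_2 k - \log_2 \log_2 n)/\log_2 n$, the IMM analysis then guarantees that, with probability at least $1-1/n^\ell$, Phase~1 terminates with $\LB = \Omega(\psi^{(k)})$, yielding $\theta = O(n\ell\log n/(\psi^{(k)}\varepsilon^2))$; the total number of RR sets generated in Phase~1 telescopes, via the geometric doubling of $\theta_i$, to at most a constant multiple of this final $\theta$, and Phase~2 generates exactly $\theta$ more.

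Multiplying the two bounds gives the claimed expected running time $O(\ell(m+n)\log n\cdot\E[\sigma(\tilde{\bv})]/(\psi^{(k)}\varepsilon^2))$. The main obstacle, relative to the CT17 argument, is making sure that the Shapley-mode update does not inflate the per-RR-set cost beyond $O(|R_v|)$: naively computing $\phi^{\Shapley}_u$ for a characteristic function on $|R_v|$ players is exponential in $|R_v|$. This is exactly what the preceding Shapley-computation lemma removes, by giving a closed-form expression that can be evaluated for all $u \in R_v$ in a single $O(|R_v|)$ sweep using the precomputed level counts $s_i$. Once that obstacle is cleared, everything else --- unbiasedness, concentration, the doubling schedule, and charging the distance annotations --- is bookkeeping on top of the IMM/CT17 proof.
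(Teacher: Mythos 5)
Your proposal is correct and follows essentially the same route as the paper, which simply inherits the IMM/CT17 sample-size analysis verbatim and relies on the preceding Shapley-computation lemma to keep the per-RR-set update cost at $O(|R_v|)$; your decomposition into (expected number of RR sets) times (expected per-RR-set cost via the $\frac{m}{n}\E[\sigma(\tilde{\bv})]$ identity) is exactly the argument the paper is invoking by reference.
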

Theorems~\ref{thm:cenest} and~\ref{thm:algotime} together show that our algorithm
	{\ICERR} provides a framework to efficiently estimate 
	all individual and Shapley centralities
	in the family of influence-based stochastic sphere-of-influence centralities.
We further remark that, although algorithm {\ICERR} is shown for computing individual
	centralities and Shapley centralities, it can be easily adapted to computing group
	centralities as well.
Of course, a group centrality has $2^n$ values, so it is not feasible to list all of them.
But if we consider that the algorithm is to estimate $n$ group centrality values for
	$n$ given sets, then we only need to replace $\est_u$ with $\est_S$ for every $S$
	in the input, and change the lines corresponding to
	individual centrality (lines~\ref{line:estnew1} and~\ref{line:estnew2}) to
	``for each $S$ in the input,  $\est_{S} = \est_{S} + g(d_{R_v}(S,v))$''.
This change is enough for estimating $n$ group centrality values.

\section{Future Work}

Many topics concerning the interaction between network centralities and influence dynamics
	can be further explored.
One open question is how to extend other centralities that are not covered by
	sphere-of-influence to influence-based centralities.
For example, betweenness centrality of a node $v$ is determined not only by the distance
	from $v$ to other nodes, but by all-pair distances, while PageRank and other
	eigenvalue centralities are determined by the entire graph structure.
Therefore, one may need to capture further aspects of the influence propagation to provide
	natural extensions to these graph-theoretical centralities.
Another open question is how to characterize centrality for a class of influence profiles, 
	e.g.\ all submodular influence profiles, all triggering models, etc.
Empirical comparisons of different influence-based centralities, as well as
	studying the applications that could utilize influence-based centralities, 
	are all interesting and important topics worth further investigation. 


\clearpage

\bibliographystyle{plain}
\bibliography{biblio}

\begin{thebibliography}{10}

\bibitem{PageRankAxioms}
Alon Altman and Moshe Tennenholtz.
\newblock Ranking systems: The pagerank axioms.
\newblock In {\em ACM}, EC '05, pages 1--8, 2005.

\bibitem{ArrowBook}
K.~J. Arrow.
\newblock {\em Social Choice and Individual Values}.
\newblock Wiley, New York, 2nd edition, 1963.

\bibitem{BharathiKS07}
Shishir Bharathi, David Kempe, and Mahyar Salek.
\newblock Competitive influence maximization in social networks.
\newblock In {\em WINE}, volume 4858, pages 306--311. 2007.

\bibitem{BV14}
P.~Boldi and S.~Vigna.
\newblock Axioms for centrality.
\newblock {\em Internet Mathematics}, 10:222--262, 2014.

\bibitem{Bonacich1972}
Phillip Bonacich.
\newblock Factoring and weighting approaches to status scores and clique
  identification,.
\newblock {\em Journal of Mathematical Sociology}, 2:113--120, 1972.

\bibitem{Bonacich1987power}
Phillip Bonacich.
\newblock Power and centrality: A family of measures.
\newblock {\em American Journal of Sociology}, 92(5):1170--1182, 1987.

\bibitem{borgatti06}
Stephen~P. Borgatti.
\newblock Identifying sets of key players in a social network.
\newblock {\em Computational and Mathematical Organizational Theory},
  12:21--34, 2006.

\bibitem{BorgsBrautbarChayesLucier}
Christian Borgs, Michael Brautbar, Jennifer Chayes, and Brendan Lucier.
\newblock Maximizing social influence in nearly optimal time.
\newblock In {\em ACM-SIAM}, SODA '14, pages 946--957, 2014.

\bibitem{PageRank}
Sergey Brin and Lawrence Page.
\newblock The anatomy of a large-scale hypertextual web search engine.
\newblock {\em Computer Networks}, 30(1-7):107--117, 1998.

\bibitem{ChenWeiBook}
Wei Chen, Laks V.~S. Lakshmanan, and Carlos Castillo.
\newblock {\em Information and Influence Propagation in Social Networks}.
\newblock Synthesis Lectures on Data Management. Morgan {\&} Claypool
  Publishers, 2013.

\bibitem{CT17}
Wei Chen and Shang-Hua Teng.
\newblock Interplay between social influence and network centrality: A
  comparative study on shapley centrality and single-node-influence centrality.
\newblock In {\em WWW}, April 2017.
\newblock The full version appears in arXiv:1602.03780.

\bibitem{ChenWW10}
Wei Chen, Chi Wang, and Yajun Wang.
\newblock Scalable influence maximization for prevalent viral marketing in
  large-scale social networks.
\newblock In {\em KDD}, pages 1029--1038, 2010.

\bibitem{ChenYZ10}
Wei Chen, Yifei Yuan, and Li~Zhang.
\newblock Scalable influence maximization in social networks under the linear
  threshold model.
\newblock In {\em ICDM}, pages 88--97, 2010.

\bibitem{ChenWW10b}
Wei Chen, Yifei Yuan, and Li~Zhang.
\newblock Scalable influence maximization in social networks under the linear
  threshold model.
\newblock In {\em ICDM}, pages 88--97, 2010.

\bibitem{DomingosRichardson}
Pedro Domingos and Matt Richardson.
\newblock Mining the network value of customers.
\newblock In {\em ACM}, KDD '01, pages 57--66, 2001.

\bibitem{domingos01}
Pedro Domingos and Matthew Richardson.
\newblock Mining the network value of customers.
\newblock In {\em KDD}, pages 57--66, 2001.

\bibitem{EverettBorgattiGroup}
M.~G. Everett and S.~P. Borgatti.
\newblock The centrality of groups and classes.
\newblock {\em Journal of Mathematical Sociology}, 23(3):181--201, 1999.

\bibitem{GhoshInterplay}
Rumi Ghosh, Shang-Hua Teng, Kristina Lerman, and Xiaoran Yan.
\newblock The interplay between dynamics and networks: centrality, communities,
  and cheeger inequality.
\newblock In {\em ACM}, KDD '14, pages 1406--1415, 2014.

\bibitem{GomezShapley}
Daniel G\'{o}mez, Enrique Gonzalez-Arang\"{u}ena, Conrado Manuel, Guillermo
  Owen, Monica del Pozo, and Juan Tejada.
\newblock Centrality and power in social networks: a game theoretic approach.
\newblock {\em Mathematical Social Sciences}, 46(1):27--54, 2003.

\bibitem{Katz}
Leo Katz.
\newblock A new status index derived from sociometric analysis.
\newblock {\em Psychometrika}, 18(1):39--43, March 1953.

\bibitem{kempe03}
David Kempe, Jon~M. Kleinberg, and {\'E}va Tardos.
\newblock Maximizing the spread of influence through a social network.
\newblock In {\em KDD}, pages 137--146, 2003.

\bibitem{kempe05}
David Kempe, Jon~M. Kleinberg, and {\'E}va Tardos.
\newblock Influential nodes in a diffusion model for social networks.
\newblock In {\em ICALP}, pages 1127--1138, 2005.

\bibitem{Leskovec07}
Jure Leskovec, Andreas Krause, Carlos Guestrin, Christos Faloutsos, Jeanne~M.
  VanBriesen, and Natalie~S. Glance.
\newblock Cost-effective outbreak detection in networks.
\newblock In {\em KDD}, pages 420--429, 2007.

\bibitem{ShapleyValueForCentrality1}
Tomasz~P. Michalak, Karthik~V. Aadithya, Piotr~L. Szczepanski, Balaraman
  Ravindran, and Nicholas~R. Jennings.
\newblock Efficient computation of the shapley value for game-theoretic network
  centrality.
\newblock {\em J. Artif. Int. Res.}, 46(1):607--650, January 2013.

\bibitem{NewmanBook}
Mark Newman.
\newblock {\em Networks: An Introduction}.
\newblock Oxford University Press, Inc., New York, NY, USA, 2010.

\bibitem{Nieminen73}
U.J. Nieminen.
\newblock On the centrality in a directed graph.
\newblock {\em Social Science Research}, 2(4):371--378, 1973.

\bibitem{IntellectualInfluence}
Ignacio Palacios-Huerta and Oscar Volij.
\newblock The measurement of intellectual influence.
\newblock {\em Econometrica}, 72:963--977, 2004.

\bibitem{RichardsonDomingos}
Matthew Richardson and Pedro Domingos.
\newblock Mining knowledge-sharing sites for viral marketing.
\newblock In {\em ACM}, KDD '02, pages 61--70, 2002.

\bibitem{richardson02}
Matthew Richardson and Pedro Domingos.
\newblock Mining knowledge-sharing sites for viral marketing.
\newblock In {\em KDD}, pages 61--70, 2002.

\bibitem{Sabidussi66}
G.~Sabidussi.
\newblock The centrality index of a graph.
\newblock {\em Psychometrika}, 31(4):581--603, 1966.

\bibitem{SB16}
D.~Schoch and U.~Brandes.
\newblock Re-conceptualizing centrality in social networks.
\newblock {\em European Journal of Applied Mathematics}, 27:971--985, 2016.

\bibitem{Shapley53}
Lloyd.~S. Shapley.
\newblock A value for $n$-person games.
\newblock In H.~Kuhn and A.~Tucker, editors, {\em Contributions to the Theory
  of Games, Volume II}, pages 307--317. Princeton University Press, 1953.

\bibitem{tang15}
Youze Tang, Yanchen Shi, and Xiaokui Xiao.
\newblock Influence maximization in near-linear time: a martingale approach.
\newblock In {\em SIGMOD}, pages 1539--1554, 2015.

\bibitem{tang14}
Youze Tang, Xiaokui Xiao, and Yanchen Shi.
\newblock Influence maximization: near-optimal time complexity meets practical
  efficiency.
\newblock In {\em SIGMOD}, pages 75--86, 2014.

\bibitem{wang2012scalable}
Chi Wang, Wei Chen, and Yajun Wang.
\newblock Scalable influence maximization for independent cascade model in
  large-scale social networks.
\newblock {\em DMKD}, 25(3):545--576, 2012.

\end{thebibliography}

\clearpage

\appendix

\section*{Appendix}

\section{Omitted Proofs}

For convenience, we restate the theorems and lemmas in this section before the proofs.



\snireachability*
\begin{proof}
For any $v \in V$,
\begin{align*}
    \SNI_v(\I) & = \sigma_{\I}(v) \\
    & = \E_{(S_1, \ldots, S_{n-1})\sim P_{\I}} \sum_{u \in V} \mathbb{I}[d_u \le \infty] \\
    & = \E_{(S_1, \ldots, S_{n-1})\sim P_{\I}} f^{\Reachability}(\vec{d}(\{v\}, S_1, \ldots, S_{n-1})) \\
    & = \psi^{\SNSSoI}[f^{\Reachability}](\I).
\end{align*}
\end{proof}

\conformdistfunc*
\begin{proof}
For any directed graph $G=(V,E)$ and any set $v \in V$, 
let $(\{v\}, S^v_1, \ldots, S^v_{n-1})$ be the BFS sequence starting from $v$ in $G$.
Then we have
\begin{align*}
\psi[f]_v(\I_G) & = \E_{(S_1, \ldots, S_{n-1})\sim P_{\I_G}(\{v\})}[f(\vec{d}(\{v\}, S_1, \ldots, S_{n-1}))] \\
& = f(\vec{d}(\{v\}, S^v_1, \ldots, S^v_{n-1})) \\
& = f(\vec{d}_G(v)) = \mu[f]_v(G).
\end{align*}
Thus the lemma holds.
\end{proof}

\layeredgraphbasis*
\begin{proof}
    All we have to show is independence since ${|\cL| = M}$. Suppose not, i.e., there is a nontrivial group of $\{\lambda(R_0, \dots, R_t)\}$ such that 
    \[
        \sum_{R_0, \dots, R_t:\,\I(R_0, \dots, R_t) \in \mathcal{L}} \lambda(R_0, \dots, R_t) P_{\I(R_0, \dots, R_t)} = \mathbf{0}.
    \]
    Let $\I(R_0^*, \dots, R_{t^*}^*)$ be a layered-graph instance
    \begin{itemize}
        \item Such that $\lambda(R_0^*, \dots, R_{t^*}^*) \ne 0$;
        \item Among those satisfying the condition above, with the largest number of layers (i.e.\ $t^*$);
        \item Among those satisfying the conditions above, with the largest number of vertices in the first layer (i.e.\ $|R_0^*|$).
    \end{itemize}
    Note that fixing the seed set, the propagation on a layered-graph instance is deterministic. That is, there is exactly one cascading sequence with the fixed seed set, which happens with probability $1$. Let $\mathit{Seq}_{\I(R_0, \dots, R_t)}(S_0)$ be the unique 
    BFS sequence which happens on $\I(R_0, \dots, R_t)$ with seed set $S_0$. We show that
    \begin{align*}
        \sum_{\emptyset \ne S_0 \subseteq R_0^*} (-1)^{1 + |S_0|} \sum_{R_0, \dots, R_t:\,\I(R_0, \dots, R_t) \in \mathcal{L}} \lambda(R_0, \dots, R_t) P_{\I(R_0, \dots, R_t)}(\mathit{Seq}_{\I(R_0^*, \dots, R_{t^*}^*)}(S_0)) \ne 0,
    \end{align*}
    which contradicts the assumption of non-independence and thereby concludes the proof.
    
    We now compute the left hand side of the above formula.

    \begin{align*}
        & \sum_{\emptyset \ne S_0 \subseteq R_0^*} (-1)^{1 + |S_0|} \sum_{R_0, \dots, R_t:\,\I(R_0, \dots, R_t) \in \mathcal{L}} \lambda(R_0, \dots, R_t) P_{\I(R_0, \dots, R_t)}(\mathit{Seq}_{\I(R_0^*, \dots, R_{t^*}^*)}(S_0)) \\
        =\ & \sum_{R_0, \dots, R_t:\, \I(R_0, \dots, R_t) \in \mathcal{L}} \lambda(R_0, \dots, R_t) \sum_{\emptyset \ne S_0 \subseteq R_0^*} (-1)^{1 + |S_0|} P_{\I(R_0, \dots, R_t)} (\mathit{Seq}_{\I(R_0^*, \dots, R_{t^*}^*)}(S_0)) \\
        =\ & \sum_{t < t^*, R_0, \dots, R_t:\, \I(R_0, \dots, R_t) \in \mathcal{L}} \lambda(R_0, \dots, R_t) \sum_{\emptyset \ne S_0 \subseteq R_0^*} (-1)^{1 + |S_0|} P_{\I(R_0, \dots, R_t)} (\mathit{Seq}_{\I(R_0^*, \dots, R_{t^*}^*)}(S_0)) \\
        +\ & \sum_{t > t^*, R_0, \dots, R_t:\, \I(R_0, \dots, R_t) \in \mathcal{L}} \lambda(R_0, \dots, R_t) \sum_{\emptyset \ne S_0 \subseteq R_0^*} (-1)^{1 + |S_0|} P_{\I(R_0, \dots, R_t)} (\mathit{Seq}_{\I(R_0^*, \dots, R_{t^*}^*)}(S_0)) \\
        +\ & \sum_{R_0, \dots, R_{t^*}:\, \I(R_0, \dots, R_{t^*}) \in \mathcal{L}} \lambda(R_0, \dots, R_{t^*}) \sum_{\emptyset \ne S_0 \subseteq R_0^*} (-1)^{1 + |S_0|} P_{\I(R_0, \dots, R_{t^*})} (\mathit{Seq}_{\I(R_0^*, \dots, R_{t^*}^*)}(S_0)) \\
        =\ & \sum_{R_0, \dots, R_{t^*}:\, \I(R_0, \dots, R_{t^*}) \in \mathcal{L}} \lambda(R_0, \dots, R_{t^*}) \sum_{\emptyset \ne S_0 \subseteq R_0^*} (-1)^{1 + |S_0|} P_{\I(R_0, \dots, R_{t^*})} (\mathit{Seq}_{\I(R_0^*, \dots, R_{t^*}^*)}(S_0)) \\
        =\ & \sum_{R_0:\, R_0 \cap R_0^* \ne \emptyset,\, \I(R_0, R_1^*, \dots, R_{t^*}) \in \mathcal{L}} \lambda(R_0, R_1^*, \dots, R_{t^*}) \\
        \times\ & \sum_{\emptyset \ne S_0 \subseteq R_0^*} (-1)^{1 + |S_0|} P_{\I(R_0, R_1^*, \dots, R_{t^*})} (\mathit{Seq}_{\I(R_0^*, \dots, R_{t^*}^*)}(S_0)).
    \end{align*}
    
    Now consider the summand in the last line of the above equation. For $R_0 \ne R_0^*$,
    \begin{align*}
        & \sum_{\emptyset \ne S_0 \subseteq R_0^*} (-1)^{1 + |S_0|} P_{\I(R_0, R_1^*, \dots, R_{t^*}^*)}(\mathit{Seq}_{\I(R_0^*, \dots, R_{t^*}^*)}(S_0)) \\
        =\ & \sum_{\emptyset \ne X \subseteq R_0^* \cap R_0} \sum_{Y \subseteq R_0^* \setminus R_0} (-1)^{1 + |X| + |Y|} P_{\I(R_0, R_1^*, \dots, R_{t^*}^*)}(\mathit{Seq}_{\I(R_0^*, \dots, R_{t^*}^*)}(X \cup Y)) \\
        +\ & \sum_{\emptyset \ne Y \subseteq R_0^* \setminus R_0} (-1)^{1 + |Y|} P_{\I(R_0, R_1^*, \dots, R_{t^*}^*)}(\mathit{Seq}_{\I(R_0^*, \dots, R_{t^*}^*)}(Y)) \\
        =\ & \sum_{\emptyset \ne X \subseteq R_0^* \cap R_0} (-1)^{1 + |X|} \sum_{Y \subseteq R_0^* \setminus R_0} (-1)^{|Y|}\\
        =\ & \sum_{\emptyset \ne X \subseteq R_0^* \setminus R_0} (-1)^{1 + |Y|} \times 0 \\
        =\ & 0.
    \end{align*}
    And for $R_0 = R_0^*$,
    \begin{align*}
        & \sum_{\emptyset \ne S_0 \subseteq R_0^*} (-1)^{1 + |S_0|} P_{\I(R_0^*, R_1^*, \dots, R_{t^*}^*)}(\mathit{Seq}_{\I(R_0^*, \dots, R_{t^*}^*)}(S_0)) = \sum_{\emptyset \ne S_0 \subseteq R_0^*} (-1)^{1 + |S_0|} = 1.
    \end{align*}
    Plugging these in, we get
    \begin{align*}
        & \sum_{\emptyset \ne S_0 \subseteq R_0^*} (-1)^{1 + |S_0|} \sum_{R_0, \dots, R_t:\,\I(R_0, \dots, R_t) \in \mathcal{L}} \lambda(R_0, \dots, R_t) P_{\I(R_0, \dots, R_t)}(\mathit{Seq}_{\I(R_0^*, \dots, R_{t^*}^*)}(S_0)) \\
        =\ & \sum_{R_0:\, R_0 \cap R_0^* \ne \emptyset,\, \I(R_0, R_1^*, \dots, R_{t^*}^*) \in \mathcal{L}} \lambda(R_0, R_1^*, \dots, R_{t^*}^*) \sum_{\emptyset \ne S_0 \subseteq R_0^*} (-1)^{1 + |S_0|} P_{\I(R_0, R_1^*, \dots, R_{t^*}^*)} (\mathit{Seq}_{\I(R_0^*, \dots, R_{t^*}^*)}(S_0)) \\
        =\ & \sum_{R_0:\, R_0 \cap R_0^* \ne \emptyset,\, R_0 \ne R_0^*,\, \I(R_0, R_1^*, \dots, R_{t^*}^*) \in \mathcal{L}} \lambda(R_0, R_1^*, \dots, R_{t^*}^*) \times 0 + \lambda(R_0^*, R_1^*, \dots, R_{t^*}^*) \times 1 \\
        =\ & \lambda(R_0^*, \dots, R_{t^*}^*) \\
        \ne\ & 0.
    \end{align*}
\end{proof}

\distfuncanonymous*
\begin{proof}
\begin{align*}
& \psi^{\Individual}[f]_{\pi(v)}(\pi(\I)) & \\
=\ & \E_{(S_1, \ldots, S_{n-1})\sim P_{\pi(\I)}(\{\pi(v)\})}[f(\vec{d}(\{\pi(v)\}, \pi(S_1), \ldots, \pi(S_{n-1})))] \\
=\ & \E_{(S_1, \ldots, S_{n-1})\sim P_{\I}(\{v\})}[f(\vec{d}(\{v\}, S_1, \ldots, S_{n-1}))]\\
=\ & \psi^{\Individual}[f]_v(\I).
\end{align*}
\end{proof}

\distfuncbayesian*
\begin{proof} 
	Suppose $\psi[f]$ is an influence-based distance-function centrality measure.
	Let instances $\I$, $\I_1$, $\I_2$ be that
	$ P_{\I} = \alpha P_{\I_1} + (1 - \alpha) P_{\I_2}$.
	Then for every sequence $(S_1, \ldots, S_{n-1})$ drawn from distribution $P_{\I}(\{v\})$, 
	it is equivalently drawn with probability $\alpha$ from $P_{\I_1}(\{v\})$, and
	with probability $1-\alpha$ from $P_{\I_2}(\{v\})$.
	Therefore, 
	\begin{align*}
	\psi[f]_v(\I) & = \E_{(S_1, \ldots, S_{n-1})\sim P_{\I}(\{v\})}[f(\vec{d}(\{v\}, S_1, \ldots, S_{n-1}))] \\
	& = \alpha \cdot \E_{(S_1, \ldots, S_{n-1})\sim P_{\I_1}(\{v\})}[f(\vec{d}(\{v\}, S_1, \ldots, S_{n-1}))] \\
    & + (1-\alpha) \cdot \E_{(S_1, \ldots, S_{n-1})\sim P_{\I_2}(\{v\})}[f(\vec{d}(\{v\}, S_1, \ldots, S_{n-1}))] \\
	& = \alpha \cdot \psi[f]_v(\I_1) + (1-\alpha) \cdot \psi[f]_v(\I_2).
	\end{align*}
	Thus the lemma holds.
\end{proof}

The proof of Theorem~\ref{thm:unique_determination} relies on a general lemma about linear mapping
	as given in \cite{CT17}, as restated below.
\begin{lemma}[Lemma 11 of \cite{CT17}] \label{lem:linearmap}
	Let $\psi$ be a mapping from a convex set $D \subseteq \R^M$ to $\R^n$ satisfying
	that for any vectors $\vv_1, \vv_2, \ldots, \vv_s \in D$, 
	for any $\alpha_1, \alpha_2, \ldots, \alpha_s \ge 0$ and $\sum_{i=1}^s \alpha_i = 1$,
	$\psi(\sum_{i=1}^{s} \alpha_i \cdot \vv_i )=  \sum_{i=1}^{s} \alpha_i \cdot \psi(\vv_i)$.
	Suppose that $D$ contains a set of linearly independent basis vectors of $\R^M$,
	$\{\vb_1, \vb_2, \ldots, \vb_M\}$ and also vector $\vzero$.
	Then for any $\vv\in D$, which can be represented as $\vv = \sum_{i=1}^M \lambda_i \cdot \vb_i$
	for some $\lambda_1, \lambda_2, \ldots, \lambda_M \in \R$,
	we have 
	\[
	\psi(\vv) = \psi\left(\sum_{i=1}^M \lambda_i \cdot \vb_i\right) 
	= \sum_{i=1}^M \lambda_i \cdot \psi(\vb_i) + \left(1-\sum_{i=1}^M \lambda_i \right) \cdot \psi(\vzero).
	\]
\end{lemma}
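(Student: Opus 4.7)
The plan is to exploit the convex-combination (``affine'') closure of $\psi$ by producing a single vector $\vu \in D$ which can be written in two different ways as a convex combination of vectors at which $\psi$ is already pinned down. Equating the two evaluations of $\psi(\vu)$ and solving for $\psi(\vv)$ will yield the claimed identity. The two representations I have in mind are: (i) $\vu = \alpha \vv + (1-\alpha) \vw$ for a judiciously chosen auxiliary vector $\vw \in D$, and (ii) $\vu$ written directly as a convex combination $\sum_i \beta_i \vb_i + \gamma \vzero$ of the basis vectors and $\vzero$.

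First, I would construct $\vw$ as follows. For each $i$, let $\mu_i = \max\bigl(0,\, -\alpha \lambda_i / (1-\alpha)\bigr)$, and set $\vw = \sum_i \mu_i \vb_i + \nu \vzero$ with $\nu = 1 - \sum_i \mu_i$. Provided $\alpha \in (0,1)$ is small enough (concretely, small compared to $1/\sum_{i:\lambda_i<0}|\lambda_i|$ and $1/\sum_{i:\lambda_i>0}\lambda_i$), we have $\mu_i \ge 0$, $\nu \ge 0$, and $\sum_i \mu_i + \nu = 1$, so $\vw$ is a genuine convex combination of vectors in $D$; hence $\vw \in D$ by convexity. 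Then $\vu := \alpha \vv + (1-\alpha) \vw$ lies in $D$ as well, and, expanding $\vv = \sum_i \lambda_i \vb_i$, one finds $\vu = \sum_i \bigl(\alpha \lambda_i + (1-\alpha)\mu_i\bigr) \vb_i$. By the choice of $\mu_i$, each coefficient $\beta_i := \alpha \lambda_i + (1-\alpha)\mu_i = \max(\alpha \lambda_i, 0)$ is nonnegative, and for sufficiently small $\alpha$ the sum $\sum_i \beta_i \le 1$, so we can append $\gamma := 1 - \sum_i \beta_i \ge 0$ times $\vzero$ to obtain a valid convex representation $\vu = \sum_i \beta_i \vb_i + \gamma \vzero$.

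Next, I would apply the convex-combination hypothesis to $\psi$ in both representations. From (i), $\psi(\vu) = \alpha \psi(\vv) + (1-\alpha) \psi(\vw)$, and $\psi(\vw) = \sum_i \mu_i \psi(\vb_i) + \nu \psi(\vzero)$. From (ii), $\psi(\vu) = \sum_i \beta_i \psi(\vb_i) + \gamma \psi(\vzero)$. Setting the two expressions equal, substituting $\beta_i = \alpha \lambda_i + (1-\alpha)\mu_i$ and $\gamma = 1 - \sum_i \beta_i$, the $(1-\alpha)\mu_i \psi(\vb_i)$ terms on both sides cancel, and the coefficient of $\psi(\vzero)$ simplifies to $\alpha\bigl(1 - \sum_i \lambda_i\bigr)$. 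Dividing through by $\alpha > 0$ yields exactly $\psi(\vv) = \sum_i \lambda_i \psi(\vb_i) + \bigl(1 - \sum_i \lambda_i\bigr) \psi(\vzero)$.

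The main obstacle is conceptual rather than technical: the raw hypothesis only gives convex combinations, but the target identity is an arbitrary affine (indeed, linear) relation with possibly negative or large coefficients $\lambda_i$. The trick is recognizing that one need not express $\vv$ itself as a convex combination; it suffices to embed $\vv$ as one ingredient of a convex combination whose output $\vu$ admits a second, independent convex-combination representation in the available basis, then solve for $\psi(\vv)$. Once the auxiliary $\vw$ is chosen to absorb the negative coefficients of $\vv$, the rest is bookkeeping; the only quantitative care needed is choosing $\alpha$ small enough for all the nonnegativity constraints on $\mu_i$, $\nu$, $\beta_i$, and $\gamma$ to hold simultaneously, which can always be arranged since there are only finitely many such inequalities.
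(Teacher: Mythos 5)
Your proof is correct. The paper itself does not prove this lemma --- it imports it verbatim as Lemma~11 of \cite{CT17} --- but your argument is the standard one for upgrading convex-combination-linearity to affine identities: absorb the negative coefficients $\lambda_i$ into an auxiliary convex combination $\vw$, pad with $\vzero$ so that both representations of $\vu$ are genuine convex combinations of elements of $D$, evaluate $\psi(\vu)$ twice, and solve for $\psi(\vv)$; the bookkeeping ($\gamma - (1-\alpha)\nu = \alpha(1-\sum_i\lambda_i)$, and the existence of a sufficiently small $\alpha>0$ satisfying the finitely many nonnegativity constraints) all checks out.
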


\uniquedetermine*
\begin{proof}[Proof of Theorem~\ref{thm:unique_determination}]
By the definition of the null influence instance $\I^N$ (same as the trivial layered-graph instance
	$\I_V(V)$), we can see that the vector representation
	of the null influence instance is the all $0$ vector, because the entries corresponding to
	$P_{\I^N}(S_0, S_0, \ldots, S_0)$ are not included in the vector by definition.
Then by Theorem~\ref{thm:basis} and the Lemma~\ref{lem:linearmap}, we know that for any
	Bayesian centrality measure $\psi$, its value on any influence instance $\I$,
	$\psi(\I)$, can be represented as a linear combination of the $\I$'s values on layered-graph
	instances (including the null instance).
Thus, the theorem holds.
\end{proof}

%

\IndividualCharacterization*
\begin{proof}
The theorem follows directly from Propositions~\ref{prop:distfuncbayesian}, \ref{prop:distfuncanonymous} and \ref{prop:conformdistfunc}.
\end{proof}

\GroupCharacterization*
\begin{proof}
The Bayesian part of the proof is essentially the same as the proof of
    Proposition~\ref{prop:distfuncbayesian}, with subset $S$ replacing node $v$.
The Anonymity part of the proof is essentially the same as the proof of
    Proposition~\ref{prop:distfuncanonymous}, with subset $S$ replacing node $v$.
The conformity part of the proof is essentially the same as the proof of
    Proposition~\ref{prop:conformdistfunc}, with subset $S$ replacing node $v$.
\end{proof}

\ShapleyCharacterization*
\begin{proof}
    We first show that $\psi^{\Shapley}[f]$ is Bayesian.
    Let  $\I$, $\I_1$, $\I_2$ be three influence instances with the same vertex set $V$,
    and $\alpha \in [0, 1]$, where $ P_{\I} = \alpha P_{\I_1} + (1 - \alpha) P_{\I_2}$.
    Then for every node $v\in V$, we have
    \begin{align}
    \psi_v^{\Shapley}(\I) &= \phi_v^{\Shapley}(\psi^{\Group}(\I)) \nonumber \\
    & = \phi_v^{\Shapley}(\psi^{\Group}(\alpha P_{\I_1} + (1 - \alpha) P_{\I_2})) \nonumber\\
    & = \phi_v^{\Shapley}(\alpha \psi^{\Group}(\I_1) + (1-\alpha) \psi^{\Group}(\I_2) ) \nonumber\\
    & = \alpha \phi_v^{\Shapley}(\psi^{\Group}(\I_1)) + (1-\alpha) \phi_v^{\Shapley}(\psi^{\Group}(\I_2)) 
    \label{eq:shapleylinear}\\
    & = \alpha \psi_v^{\Shapley}(\I_1) + (1-\alpha) \psi_v^{\Shapley}(\I_2), \nonumber
    \end{align} 
    where Eq.~\eqref{eq:shapleylinear} is due to the linearity of the Shapley value, which is
    easy to verify by the following derivations:
    \begin{align*}
    \phi_v^{\Shapley}(\alpha\tau_1 + \beta \tau_2) & = \E_{\pi\sim \Pi}[(\alpha\tau_1 + \beta \tau_2)(S_{\pi,v} \cup \{v\}) - 
    (\alpha\tau_1 + \beta \tau_2)(S_{\pi,v})] \\
    & = \alpha \E_{\pi\sim \Pi}[\tau_1(S_{\pi,v} \cup \{v\}) - \tau_1(S_{\pi,v})] + \beta \E_{\pi\sim \Pi}[\tau_2(S_{\pi,v} \cup \{v\})- \tau_2(S_{\pi,v})] \\
    & = \alpha \phi_v^{\Shapley}(\tau_1) + \beta \phi_v^{\Shapley}(\tau_2).
    \end{align*}
    Now we show that $\psi^{\Shapley}[f]$ conforms with $\mu^{\Shapley}[f]$.
    For any directed graph $G=(V,E)$ and any node $v\in V$, let
    $(\{v\}, S^v_1, \ldots, S^v_{n-1})$ be the BFS sequence starting from $v$ in graph $G$.
    We have
    \begin{align}
    \psi^{\Shapley}[f]_v(\I_G) & = \phi^{\Shapley}_v(\psi^{\Group}[f](\I_G))  \nonumber \\
    & = \E_{\pi\sim \Pi}[\psi^{\Group}[f]_{S_{\pi,v} \cup \{v\}}(\I_G) - \psi^{\Group}[f]_{S_{\pi,v}}(\I_G)] \nonumber \\
    & = \E_{\pi\sim \Pi}[\mu^{\Group}[f]_{S_{\pi,v} \cup \{v\}}(G) - \mu^{\Group}[f]_{S_{\pi,v}}(G)] \label{eq:groupconform} \\
    & = \phi^{\Shapley}_v(\mu^{\Group}[f](G)) \nonumber \\
    & = \mu^{\Shapley}[f](G), \nonumber
    \end{align}
    where Eq.~\eqref{eq:groupconform} is because influence-based group centrality
    $\psi^{\Group}[f]$ conforms with the structure-based group centrality
    $\mu^{\Group}[f]$ (Theorem~\ref{thm:GroupCharacterization}). Anonymity follows from anonymity of group centralities (Theorem~\ref{thm:GroupCharacterization}).
    Uniqueness then follows from Theorem~\ref{thm:unique_determination}.
\end{proof}

\begin{lemma} \label{lem:distequivalence}
	For fixed nodes $u , w\in V$, suppose we generate a random RR set $R_w$ rooted at $w$,
	according to a triggering model instance $\I$.
	Then we have 
	\[
	\E[g(d_{R_w}(u,w))] = 
	\E_{(S_1, \ldots, S_{n-1})\sim P_{\I}(\{u\})}[g(d_w(\{u\}, S_1, \ldots, S_{n-1}))].
	\]
\end{lemma}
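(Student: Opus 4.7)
The plan is to establish the identity via the classical \emph{live-edge graph coupling} for the triggering model, which realizes both the forward cascade and the reverse RR-set simulation on a single joint random object. First, I sample, independently for every node $v \in V$, a triggering set $\bT(v) \sim D(v)$. This yields a random directed graph $L = (V, E_L)$ with $E_L = \{(w, v) \mid v \in V,\, w \in \bT(v)\}$. Under this coupling, the forward cascade from seed $\{u\}$ in $\I$ has the same joint distribution as breadth-first reachability in $L$ from $u$: the cascading sequence is $(\{u\}, S_1, \ldots, S_{n-1})$ where $S_t$ is the set of nodes at $L$-distance at most $t$ from $u$. Hence the cascading distance $d_w(\{u\}, S_1, \ldots, S_{n-1})$ equals the shortest-path distance $d_L(u, w)$, with value $\infty$ if $w$ is not $L$-reachable from $u$.

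Next, I would argue that the reverse RR-set simulation rooted at $w$ realizes the same live-edge graph via a different sampling order. Because the triggering sets $\{\bT(v)\}_{v \in V}$ are mutually independent, one can equivalently postpone sampling $\bT(v)$ until $v$ is first \emph{needed} by the algorithm. The RR-set process touches node $v$ precisely when $v$ enters $R_w$, samples $\bT(v)$ at that moment, and extends $R_w$ by its elements. This is a standard deferred-decisions argument showing that the resulting distribution over the portion of $L$ that influences $R_w$ agrees with the marginal under the upfront joint sampling. The recursive structure of the RR-set generation is then exactly a reverse BFS in $L$ rooted at $w$, so the recorded distance $d_{R_w}(u, w)$ equals $d_L(u, w)$ whenever $u \in R_w$, and otherwise $R_w$ does not contain $u$, which happens precisely when $d_L(u, w) = \infty$; by the convention $d_{R_w}(u, w) = \infty$ in this case, we again have $d_{R_w}(u, w) = d_L(u, w)$.

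Combining the two steps, under the live-edge coupling the random variables $d_w(\{u\}, S_1, \ldots, S_{n-1})$ and $d_{R_w}(u, w)$ are almost surely equal (both equal $d_L(u, w)$). Applying $g$ to each side preserves this equality, so taking expectations yields the claimed identity.

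The main obstacle I anticipate is the rigorous justification that the on-demand sampling performed by the RR-set process induces the same joint distribution of $d_L(u, w)$ as the upfront live-edge sampling. This is the classical equivalence-of-sampling-order fact underlying RR-set estimators (cf.\ \cite{BorgsBrautbarChayesLucier,tang15}); its essence is the mutual independence of $\{\bT(v)\}_{v \in V}$, together with the observation that the event $\{d_{R_w}(u, w) = t\}$ is measurable with respect to the triggering sets of exactly those nodes reached by the reverse BFS from $w$ within $t$ steps, whose distribution is unaffected by whether the other triggering sets are sampled simultaneously or not at all.
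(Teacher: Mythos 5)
Your proof is correct and takes essentially the same route as the paper's: both establish the identity via the live-edge graph coupling, showing that under a fixed live-edge graph $L$ the cascading distance from $u$ to $w$ and the recorded reverse-simulation distance $d_{R_w}(u,w)$ both equal $d_L(u,w)$ (including the $\infty$ case), and then taking expectations over $L$. The extra care you devote to the deferred-decisions justification of the on-demand sampling is a welcome elaboration of a step the paper states more briefly, but it is not a different argument.
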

\begin{proof}
	We know that the triggering model is equivalent to 
	the following live-edge graph model~\cite{kempe03}:
	for every node $v\in V$, sample its triggering set $T(v)$ and add edges
	$(u,v)$ to a live-edge graph $L$ for all $u\in T(v)$ (these edges are called
	live edges).
	Then the diffusion from a seed set $S$ is the same as the BFS propagation in $L$ from $S$.
	Since reverse simulation for generating RR set $R_w$ also do the same sampling of the
	triggering sets, we can couple the reverse simulation process with the forward
	propagation by fixing a live-edge graph $L$.
	For a fixed live-edge graph $L$, the subgraph $G_{R_w}$ generated by reverse simulation
	from the fixed root $w$ is simply the induced subgraph of $L$ induced by all nodes
	that can reach $w$ in $L$.
	Thus $d_{R_w}(u,w)$ is the fixed distance from $u$ to $w$ in $L$, namely $d_L(u,w)$.
	On the other hand, with the fixed $L$, the cascading sequence starting from $u$
	is the fixed BFS sequence starting from $u$ in $L$.
	Then in this BFS sequence $d_w(\{u\}, S_1, \ldots, S_{n-1})$ is the distance
	from $u$ to $w$ in this sequence, which is the same as the distance from $u$ to $w$
	in $L$.
	Therefore, $d_w(\{u\}, S_1, \ldots, S_{n-1}) = d_{R_w}(u,w)$ for fixed live-edge graph $L$.
	We can then vary $L$ according to its distribution, and obtain
	\[
	\E[g(d_{R_w}(u,w))] = 
	\E_{(S_1, \ldots, S_{n-1})\sim P_{\I}(\{u\})}[g(d_w(\{u\}, S_1, \ldots, S_{n-1}))].
	\]
\end{proof}

\RRsetCrucial*
\begin{proof}
	Consider first the influence-based distance-function centrality.
	We have
	\begin{align}
	n\cdot \E[g(d_{R_v}(u,v))] & = n \cdot \sum_{w\in V} \Pr\{w = v\} \E[g(d_{R_v}(u,v)) \mid w = v] \nonumber \\
	& = n \cdot \sum_{w\in V} \frac{1}{n} \E[g(d_{R_w}(u,w))] \nonumber \\
	& = \sum_{w\in V} 
	\E_{(S_1, \ldots, S_{n-1})\sim P_{\I}(\{u\})}[g(d_w(\{u\}, S_1, \ldots, S_{n-1}))] 
	\label{eq:distequv} \\
	& = \E_{(S_1, \ldots, S_{n-1})\sim P_{\I}(\{u\})}\left[ \sum_{w\in V} 
	g(d_w(\{u\}, S_1, \ldots, S_{n-1})) \right] \nonumber \\
	& = \E_{(S_1, \ldots, S_{n-1})\sim P_{\I}(\{u\})}[f(\vec{d}(\{u\}, S_1, \ldots, S_{n-1}))] 
	\nonumber \\
	& = \psi[f]_u(\I), \nonumber
	\end{align}	
	where Eq.~\eqref{eq:distequv} is by Lemma~\ref{lem:distequivalence}.
	
	Next consider the influence-based distance-function Shapley centrality.
	\begin{align}
	n\cdot \E[\phi^{\Shapley}_u(g(d_{R_v}(\cdot,v)))] & = n \cdot \sum_{w\in V} \Pr\{w = v\} \E[\phi^{\Shapley}_u(g(d_{R_v}(\cdot,v))) \mid w = v] \nonumber \\
	& = n \cdot \sum_{w\in V} \frac{1}{n} \E[\phi^{\Shapley}_u(g(d_{R_w}(\cdot,w)))] \nonumber \\
	& = \phi^{\Shapley}_u \left(\sum_{w\in V} \E[g(d_{R_w}(\cdot,w))  ] \right) 
	\label{eq:ShapleyLinear} \\
	& = \phi^{\Shapley}_u \left( \psi^{\Group}[f](\I) \right) 
	\label{eq:dist2group} \\
	& = \psi^{\Shapley}[f]_u(\I), \nonumber
	\end{align}	
	where Eq.~\eqref{eq:ShapleyLinear} is by the linearity of the Shapley value, as already
	argued in the proof of Theorem~\ref{thm:GroupCharacterization}, and
	Eq.~\eqref{eq:dist2group} follows the similar derivation step as in the case
	of individual centrality above.
\end{proof}

\thmAlgo*
\begin{proof}
The proof follows exactly the same proof structure as the proof in \cite{CT17}.
All we need to change is to use our crucial lemma connecting RR sets with 
	centrality measures (Lemma~\ref{lem:RRsetCrucial}) to replace the corresponding
	lemma (Lemma 23) in \cite{CT17}.
\end{proof}

\ShapleyCompute*
\begin{proof}
    We prove only for general $g$.
    For $u$ at the $k$-th level,
    \begin{align*}
        & \phi^{\Shapley}_u(g(d_{R_v}(\cdot,v))) \\
        =\ & \frac{1}{|R_v|!} \sum_\pi (g(d_{R_v}(S_{\pi, u} \cup \{u\}, v)) - g(d_{R_v}(S_{\pi, u}, v))) \\
        =\ & \frac{1}{|R_v|} g(k) + \frac{1}{|R_v|!} \sum_\pi \sum_{k < i \le \Delta} \mathbb{I}[d_{R_v}(S_{\pi, u}, v) = i] (g(k) - g(i)) \\
        =\ & \frac{1}{|R_v|} g(k) + \frac{1}{|R_v|!} \sum_{k < i \le \Delta} (g(k) - g(i)) \sum_{1 \le j \le |R_v|} \left[\binom{s_i}{j - 1}(j - 1)! - \binom{s_{i + 1}}{j - 1}(j - 1)!\right] \\
        =\ & \frac{1}{|R_v|} g(k) + \frac{1}{|R_v|!} \sum_{k < i \le \Delta} (g(k) - g(i)) \left(\sum_{1 \le j \le s_i + 1} \frac{s_i!}{(s_i - j + 1)!} - \sum_{1 \le j \le s_{i + 1} + 1} \frac{s_{i + 1}!}{(s_{i + 1} - j + 1)!}\right) \\
        =\ & \frac{1}{|R_v|} g(k) + \frac{1}{|R_v|!} \sum_{k < i \le \Delta} (g(k) - g(i)) \left(\sum_{0 \le j \le s_i} \frac{s_i!}{j!} - \sum_{0 \le j \le s_{i + 1}} \frac{s_{i + 1}!}{j!}\right).
    \end{align*}

    One possible way to compute the value above is:
    \begin{enumerate}
        \item Compute $x!$ and $\sum_{0 \le i \le x} \frac{1}{i!}$ for all $x \in [|R_v|]$ in time $O(|R_v|)$.
        \item Now the term $\sum_{0 \le j \le s_i} \frac{s_i!}{j!}$ can be computed in additional constant time for any $i$. We can compute 
        \[
            \frac{1}{|R_v|!} \sum_{k < i \le \Delta} \left(g(k) - g(i)\right) \left(\sum_{0 \le j \le s_i} \frac{s_i!}{j!} - \sum_{0 \le j \le s_{i + 1}} \frac{s_{i + 1}!}{j!}\right)
        \]
        for all $0 \le k \le \Delta$ in additional total time $O(\Delta)$ by first computing a suffix sum of
        \[
            g(i) \left(\sum_{0 \le j \le s_i} \frac{s_i!}{j!} - \sum_{0 \le j \le s_{i + 1}} \frac{s_{i + 1}!}{j!}\right).
        \]
        That is,
        \[
            \sum_{k < i \le \Delta} g(i) \left(\sum_{0 \le j \le s_i} \frac{s_i!}{j!} - \sum_{0 \le j \le s_{i + 1}} \frac{s_{i + 1}!}{j!}\right).
        \]
        for all $k \in \{0, \dots, \Delta\}$.
        \item Assign the values to vertices in each level in total time $O(|R_v|)$.
    \end{enumerate}

\end{proof}

\end{document}